\definecolor{gray}{gray}{0.4}
\newcommand{\true}{\textsf{TRUE}\xspace}
\newtheorem{theorem}{Theorem}[section]
\newtheorem{lemma}[theorem]{Lemma}
\newtheorem{proposition}[theorem]{Proposition}
\theoremstyle{definition}
\newtheorem{definition}[theorem]{Definition}
\theoremstyle{remark}
\newtheorem*{example}{Example}  
\newtheorem{remark}[theorem]{Remark}
\numberwithin{equation}{section}
\newcommand{\LM}{\mathrm{LM}}%for leading term
\newcommand{\LT}{\mathrm{LT}}%for leading term
\newcommand{\ZZ}{\mathbb{Z}}
\newcommand{\KK}{\mathbb{K}}
\newcommand{\NN}{\mathbb{N}}
\newcommand{\divides}{\mid}
\newcommand{\LC}{\mathrm{LC}}%for leading coefficient
\newcommand{\lcm}{\mathrm{lcm}}
\newcommand{\bfu}{\mathbf{u}}
\newcommand{\bfe}{\mathbf{e}}
\newcommand{\bfp}{\mathbf{p}}
\newcommand{\bfT}{\mathbf{T}}
\newcommand{\bfr}{\mathbf{r}}
\newcommand{\bfg}{\mathbf{g}}
\newcommand{\bfh}{\mathbf{h}}
\newcommand{\bfalpha}{\boldsymbol{\alpha}}
\newcommand{\bfbeta}{\boldsymbol{\beta}}
\newcommand{\bfv}{\mathbf{v}}
\newcommand{\bfL}{\mathbf{L}}
\newcommand{\bfq}{\mathbf{q}}
\newcommand{\bfz}{\mathbf{z}}
\newcommand{\modpol}[1]{\overline{#1}}
\newcommand{\bbfp}{\modpol{\bfp}}
\newcommand{\bbfq}{\modpol{\bfq}}
\newcommand{\bbfr}{\modpol{\bfr}}
\newcommand{\bbfalpha}{\modpol{\bfalpha}}
\newcommand{\bbfbeta}{\modpol{\bfbeta}}
\newcommand{\bbfu}{\modpol{\bfu}}
\newcommand{\bbfL}{\modpol{\bfL}}
\newcommand{\bbfe}{\modpol{\bfe}}
\newcommand{\bbfz}{\modpol{\bfz}}
\newcommand{\sig}{\mathfrak{s}}
\newcommand{\Sig}{S}
\newcommand{\RR}{R}
\renewcommand{\AA}{A}
\newcommand{\alname}[1]{\textsf{#1}\xspace}
\newcommand{\algocoefs}{\alname{LinDecomp}}
\newcommand{\algosatideal}{\alname{SatIdeal}}
\newcommand{\algosing}{\alname{1-SingularReducible}}
\newcommand{\algoreduce}{\alname{Reduce}}
\newcommand{\algosigreduce}{\alname{RegularReduce}}
\newcommand{\algosigmoller}{\alname{SigMöller}}
\newcommand{\sGB}{$\sig$-GB\xspace}
\newcommand{\sssGB}{s-s $\sig$-GB\xspace}
\newcommand{\SPol}{\textup{S-Pol}}
\newcommand{\wrt}{w.r.t.\xspace}
\newcommand{\resp}{resp.\xspace}
\newcommand{\eg}{\emph{e.g.}}
\newcommand{\ie}{\emph{i.e.}\xspace}
\newcommand{\Sat}{\mathrm{Sat}}
\newcommand{\Mon}{\mathrm{Mon}}
\newcommand{\smallvspace}{\vspace{0pt}}
\newcommand{\proofvspace}{\vspace{0pt}} % -14
\newcommand{\largevspace}{\vspace{-12pt}}
\newif{\ifappendix}
\begin{document}

\title[A Signature-based Algorithm for Computing Gröbner Bases over Principal Ideal Domains]{A Signature-based Algorithm for Computing \\ Gröbner Bases over Principal Ideal Domains}

\author{Maria Francis}
\address{Dept of Computer Science and Engineering, Indian Institute of Technology, Hyderabad, India\\
  \url{mariaf@iith.ac.in}}

\thanks{This work was started when the first author was supported by the Austrian FWF grant Y464.
  The second author is supported by the Austrian FWF grant F5004.}

\author{Thibaut Verron}
\address{Institute for Algebra, Johannes Kepler University, Linz, Austria\\
  \url{thibaut.verron@jku.at}}

% \affiliation{
%   \institution{Institute for Algebra / Johannes Kepler University}
%   \city{4040 Linz, Austria}  
% }
% \email{maria.francis@jku.at}

% \affiliation{%Johannes Kepler Universität;
%   \institution{Institute for Algebra / Johannes Kepler University}
%   \city{4040 Linz, Austria}  
% }
% \email{thibaut.verron@jku.at}

% \thanks{The authors are supported by the Austrian FWF grants Y464 and F5004 respectively.}
%\thanks{M.F. (resp. T.V) is supported by the Austrian FWF grants Y464 (resp. F5004).}

\maketitle

\largevspace
\largevspace
%\largevspace
\begin{abstract}
  Signature-based algorithms have become a standard approach for Gröbner basis computations for polynomial systems over fields, but how to extend these techniques to coefficients in general rings is not yet as well understood.

  In this paper, we present a proof-of-concept signature-based algorithm for computing Gröbner bases over commutative integral domains.
  It is adapted from a general version of Möller's algorithm (1988) which considers reductions by multiple polynomials at each step.
  This algorithm performs reductions with non-decreasing signatures, and in particular, signature drops do not occur.
  When the coefficients are from a principal ideal domain (\eg{} the ring of integers or the ring of univariate polynomials over a field), we prove correctness and termination of the algorithm, and we show how to use signature properties to implement classic signature-based criteria to eliminate some redundant reductions.
  In particular, if the input is a regular sequence, the algorithm operates without any reduction to 0.

  We have written a toy implementation of the algorithm in Magma.
  Early experimental results suggest that the algorithm might even be correct and terminate in a more general setting, for polynomials over a unique factorization domain (\eg{} the ring of multivariate polynomials over a field or a PID).
\end{abstract}

\largevspace
\section{Introduction}

\newlength{\oldparskip}
\setlength{\oldparskip}{\parskip}
\setlength{\parskip}{1pt}

The theory of Gröbner bases was introduced by Buchberger in 1965 \cite{Buchberger:1965:thesis} and has since become a fundamental algorithmic tool in computer algebra.
Over the past decades, many algorithms have been developed to compute Gröbner bases more and more efficiently.
The latest iteration of such algorithms is the class of signature-based algorithms, which introduce the notion of signatures and use it to detect and prevent unnecessary or redundant reductions.
Following early work in \cite{MollerMoraTraverso1992}, the technique of signatures was first formally introduced for Algorithm F5~\cite{Faugere:2002:F5}, allowing to compute a Gröbner basis for a regular sequence without any reduction to zero.
Since then, there have been many research works in this direction~\cite{Gao:2010:G2V,arri:2011:f5,eder:2010:f5c,eder:2011:signature}.

All these algorithms are for ideals in polynomial rings over fields.
Gröbner bases can be defined and computed over commutative rings~\cite[Ch.~4]{Adams:1994:introtogrobnerbasis}.
This can be used in many applications, \eg{} for polynomials over $\ZZ$ in lattice-based cryptography~\cite{FrancisDukkipati:2014:Hash} or for polynomials over a polynomial ring as an elimination tool~\cite{Nabeshima:2009:polyringoverpolyring}.
Many other examples are described in~\cite{Lichtblau:applications}.

If the coefficient ring is not a field, there are two ways to define Gröbner bases, namely weak and strong bases.
Strong Gröbner bases ensure that normal forms can be computed as in the case of fields.
But a strong Gröbner basis is in general larger than a weak one, and if the base ring is not a Principal Ideal Domain (PID), then some ideals exist which do not admit a strong Gröbner basis.
On the other hand, weak Gröbner bases, or simply Gröbner bases, always exist for polynomial ideals over a Noetherian commutative ring.
They do not necessarily define a unique normal form, but they can be used to decide ideal membership.
If necessary, over a PID, a post-processing phase performing coefficient reductions can be used to obtain a strong Gröbner basis.

Recent works have focused on generalizing signature-based techniques to Gröbner basis algorithms over rings.
First steps in this direction, adding signatures to a modified version of Buchberger's algorithm for strong Gröbner bases over Euclidean rings~\cite{Lichtblau}, were presented in \cite{Eder:2017:EuclideanRings}.
That paper proves that a signature-based Buchberger's algorithm for strong Gröbner bases cannot ensure correctness of the result after encountering a ``signature-drop'', but can nonetheless be used as a prereduction step in order to significantly speed up the computations.

In this paper, we prove that it is possible to compute a weak signature-Gröbner bases of polynomial ideals over PIDs (including Euclidean rings) using signature-based techniques.
The proof-of-concept algorithm that we present is adapted from the weak Gröbner basis algorithm due to Möller~\cite{Moller:1988:grobnerrings2} \cite[Sec.~4.2]{Adams:1994:introtogrobnerbasis}, which is designed to compute a basis for a polynomial ideal over any ring, and does so by considering combinations and reductions by multiple polynomials at once.
The main difference with the results of~\cite{Eder:2017:EuclideanRings} is that we use a stricter definition of regular reductions, effectively preventing more reductions from happening, and at the same time adding more polynomials to the basis.

This ensures that no reductions leading to signature-drops can happen in the algorithm,
and as a consequence, we prove that the algorithm terminates and computes a signature Gröbner basis with elements ordered with non-decreasing signatures.
This property allows us to examine classic signature-based criteria, such as the syzygy criterion, the F5 criterion and the singular criterion, and show how they can be adapted to the case of PIDs.
In particular, when the input forms a regular sequence, the algorithm performs no reductions to zero.
To the best of our knowledge, this is the first algorithm that, given a regular sequence of polynomials with coefficients in a PID, can compute a Gröbner basis of the corresponding ideal without any reduction to zero.

Möller also presented an efficient algorithm that computes (strong)
Gr\"obner basis for polynomial ideals where the coefficients are from
Principal Ideal Rings~\cite[Section 4]{Moller:1988:grobnerrings2}.
That algorithm skips the combinatorial bottleneck of computing
saturated sets.
Instead, it uses two polynomials to build S-polynomials and makes use
of Gebauer-Möller criteria~\cite{gebauer:1988:installation},
previously introduced for fields, to discard redundant S-polynomials.

Whenever necessary, for clarity, we shall refer to that algorithm as
Möller's \emph{strong} algorithm.
The algorithm at the center of our focus, computing weak Gröbner bases, will be referred to as Möller's \emph{weak} algorithm, or simply Möller's algorithm.

We have written a toy implementation of the algorithms presented, with the F5 and the singular criteria, in the Magma Computational Algebra System~\cite{Magma}, and compared its efficiency, in terms of number of excluded pairs, with Möller's strong algorithm.
Experimentally, on all considered examples, Möller's (weak) algorithm with signatures does compute and reduce fewer S-polynomials than Möller's strong algorithm.

Möller's (weak) algorithm, without signatures, works for polynomial systems over any Noetherian commutative ring.
The signature-based algorithm is only proved to be correct and to terminate for PIDs, but with very few changes, it can be made to accommodate inputs with coefficients in a more general ring.
Interestingly, early experimental data with coefficients in a multivariate polynomial ring (a Unique Factorization Domain but not a PID) suggest that the signature-based algorithm might work over more general rings than just PIDs.
For that reason, and because it does not over-complicate the exposition, we choose to present Möller's algorithms, with and without signatures, in their most general form, accepting input over any Noetherian commutative ring. 

% How to add signatures to Buchberger's algorithm for PIDs, and whether this can be done in a way compatible with the Gebauer-Möller criteria, will be the focus of future research.

% As said before, Möller's general algorithm can use more than two polynomials when building S-polynomials.
% This idea is similar to the selection technique introduced in Algorithm F4~\cite{faugere:1999:F4} to replace polynomial reductions with linear algebra reductions.
% This idea can be investigated in future works for making the algorithm more efficient.
%Another research axis to this end would be to consider the interaction with the Gebauer-Möller criteria in the case of PIDs.\fxfatal{Expand more on the Buchberger-Möller algorithm}

%\vspace{3pt}
\paragraph{Previous works}
\label{sec:Previous-works}

Signature-based Gröbner basis algorithms over fields have been extensively studied, and an excellent survey of those works can be found in \cite{eder:2017:survey}. The technical details of most proofs can be found in \cite{practicalgrobner:2012:stillman,eder:2013:signature}.
The theory of Gr\"obner bases for polynomials over Noetherian commutative rings dates back to the 1970s \cite{Zacharias:1978:grobnerbasisrings3,Moller:1988:grobnerrings2} and a good exposition of these approaches can be found in \cite{Adams:1994:introtogrobnerbasis}.
Algorithms exist for both flavors of Gröbner bases: Buchberger's algorithm~\cite{Buchberger:1965:thesis} computes weak Gröbner bases over a PID, and Möller's weak algorithm~\cite{Moller:1988:grobnerrings2} extends this approach to Noetherian commutative rings.
As for strong Gröbner bases, they can be computed using an adapted version of Buchberger's algorithm~\cite{Kandri-Rody-Kapur} or Möller's strong algorithm~\cite{Moller:1988:grobnerrings2}.
Algorithms for computing signature Gröbner bases over Euclidean rings have been investigated in~\cite{Eder:2017:EuclideanRings}.

% \subsection{Structure of the paper}
% \label{sec:Structure-paper}
% 
% In Sec.~\ref{sec:Notations} we recall terminology about polynomials and signatures, and introduce the specific orders that we need for the algorithms over rings.
% In Sec.~\ref{sec:Grobn-Bases-Polyn}, we give a brief description of Möller's algorithm without signatures, and in particular, we recall the definitions of saturated sets and weak reduction.
% In Sec.~\ref{sec:Sign-Based-Algor}, we show how Möller's algorithm can be adapted to take into account the signatures of the elements, and we prove that the resulting algorithm is correct and terminates.
% In Sec.~\ref{sec:Elim-S-vect}, we discuss how signature-based criteria can be used to eliminate some useless reductions in the algorithm.
% Finally, in Sec.~\ref{sec:Implementation}, we discuss implementation details and results.

% Description of different sections. 

%\cite[Algo.~4.2.2]{Adams:1994:introtogrobnerbasis} 

\setlength{\parskip}{\oldparskip}

\section{Notations}
\label{sec:Notations}

Let $\RR$ be a Noetherian integral domain, which is assumed to have a unit and be commutative.
Let $\AA = \RR[x_{1},\dots,x_{n}]$ be the polynomial ring in $n$ indeterminates $x_1,\ldots,x_n$ over $\RR$.
A monomial in $\AA$ is $x^{a} = x_{1}^{a_{1}}\dots x_{n}^{a_{n}}$ where $a=(a_{1},\dots,a_{n})\in \NN^{n}$.
A term is $kx^{a}$, where $k \in \RR$ and $k \neq 0$. 
We will denote all the terms 
in $\AA$ by $\mathrm{Ter}(\AA)$ and all the monomials in 
$\AA$ by $\mathrm{Mon}(\AA)$.
We use the notation $\mathfrak{a}$ for polynomial ideals in $\AA = \RR[x_1,\ldots,x_n]$  and $I$ for ideals in the coefficient ring $\RR$.  

The notion of monomial order can be directly extended from $\KK[x_1, \ldots, x_n]$ to $\AA$.
In the rest of the paper, we assume that $\AA$ is endowed with an implicit monomial order $\prec$, and we define as usual the leading monomial $\LM$, the leading term $\LT$ and the leading coefficient $\LC$ of a given polynomial.

Given a tuple of polynomials $(g_{1},\dots,g_{s})$ and $i \in \{1,\dots,s\}$, we will frequently denote, for brevity, $M(i) = \LM(g_{i})$, $C(i) = \LC(g_{i})$ and $T(i) = \LT(g_{i}) = C(i) M(i)$.

% The S-polynomial of two polynomials $f,g$ is defined as
% \begin{equation}
%   \label{eq:69}
%   \SPol(f,g) = \frac{\lcm(\LT(f),\LT(g))}{\LT(f)} f - \frac{\lcm(\LT(f),\LT(g))}{\LT(g)} g.
% \end{equation}
% Given a family of polynomials $g_{1},\dots,g_{s}$ and $i,j \in \{1,\dots,s\}$, we define
% \begin{gather}
%   \label{eq:70}
%   M(i,j) = \lcm(\LM(g_{i}),\LM(g_{j})) \\
%   \label{eq:71}
%   C(i,j) = \lcm(\LC(g_{i}),\LC(g_{j})) \\
%   \label{eq:72}
%   T(i,j) = \lcm(\LT(g_{i}),\LT(g_{j})) = C(i,j) M(i,j)
% \end{gather}
% which allows to write the S-polynomial of $g_{i}$ and $g_{j}$ in the following form:
% \begin{equation}
%   \label{eq:73}
%   \SPol(g_{i},g_{j}) = \frac{C(i,j)}{C(i)} \frac{M(i,j)}{M(i)}g_{i} - \frac{C(i,j)}{C(j)} \frac{M(i,j)}{M(j)}g_{j}.
% \end{equation}

\largevspace
\section{Gr\"obner Bases in Polynomial Rings over $\RR$}
\label{sec:Grobn-Bases-Polyn}

%In this section we give a brief description on how to generalize  the theory of Gr\"obner bases to polynomial rings over a Noetherian integral domain, $\RR$. 
For more details about the contents of this section, one can refer to 
\cite[Chapter 4]{Adams:1994:introtogrobnerbasis}.%, where the theory is described for polynomial rings over Noetherian commutative rings. 

\subsection{Computations in $\RR$}
\label{sec:Computations-A}

We assume that our coefficient ring $\RR$ is \emph{effective} in the following sense.
\begin{enumerate}[(1)]
  \item There are algorithms for arithmetic operations ($+$, ${\ast}$, zero test) in $\RR$.
  \item There is an algorithm~\algocoefs:
  \begin{itemize}
    \item Input: $\{k_{1},\dots,k_{s}\} \subset \RR$, $k \in \RR$
    \item Output: \true iff $k \in \langle k_{1},\dots,k_{s} \rangle$ and if yes, $l_1, \ldots, l_s \in \RR$ such that $k = k_1l_1+ \cdots + k_sl_s$.
  \end{itemize}
  \item There is an algorithm~\algosatideal:
  \begin{itemize}
    \item Input:  $\{k_{1},\dots,k_{s}\} \subset \RR$, $k \in \RR$
    \item Output: $\{l_{1},\dots,l_{r}\} \subset \RR$ % such that $\langle l_{1},\dots,l_{r} \rangle$
    generators of the saturated ideal
    $\langle k_{1},\dots,k_{s} \rangle : \langle k \rangle$.
  \end{itemize}
\end{enumerate}
The condition that an algorithm~\algocoefs exists is called \emph{linear equations being solvable in $\RR$} in~\cite[Def.~4.1.5]{Adams:1994:introtogrobnerbasis}.
% Euclidean rings, having  GCD computations, and multivariate polynomial rings, having Gr\"obner bases, are effective rings, for these 
% techniques can be used to describe the algorithms \algocoefs and \algosatideal. 
% We illustrate this for the ring of integers below. 

\begin{example}
  %
  %The ring of integers $\ZZ$ and the ring polynomial rings over fields, $\KK[y_1, \ldots, y_m]$ are  effective rings.
  Euclidean rings are effective, because one can implement those algorithms using GCD computations and Euclidean reductions.
  For example over $\ZZ$,
  $\algocoefs(\{ 4 \}, 12)$ is $(\true,\{3\})$, since 12 is in the ideal $\langle 4 \rangle$ and $12 = 3 \cdot 4$.
  The output of $\algosatideal(\{ 4 \},6)$ is $\{2\}$ since $\langle 4 \rangle : \langle 6 \rangle = \frac{1}{6}(\langle 4 \rangle \cap \langle 6 \rangle) = \frac{1}{6}\langle 12 \rangle = \langle 2 \rangle $.

  The ring of multivariate polynomials over a field is also effective, using Gröbner bases and normal forms to perform the same ideal computations.
\end{example}

\subsection{Weak Gröbner bases over rings}
\label{sec:Polynomial-reduction}

For reduction in fields it is enough to check if the leading term of $f$ is divisible by the leading monomial of $g$ even though the actual reduction happens with the leading term of $g$.
Clearly, in rings this is not a sufficient condition : $\LC(g)$ may not divide $\LC(f)$ even if $\LM(g)$ divides $\LM(f)$.
Requiring that $\LT(g)$ divide $\LT(f)$ leads to the notion of strong Gröbner basis, more details can be found in~\cite[Sec.~4.5]{Adams:1994:introtogrobnerbasis}.

Here we are interested in computing weak Gröbner bases, and we recall the main definitions in this section.
First, following~\cite{Moller:1988:grobnerrings2,Adams:1994:introtogrobnerbasis}, we expand the definition of reduction to allow for a linear combination of reducers.
We define saturated sets~\cite[Def.4.2.4]{Adams:1994:introtogrobnerbasis} (called maximal sets in \cite{Moller:1988:grobnerrings2}).
%and basic sets.

\begin{definition}\label{saturation}
  Given a tuple of monomials $(x^{a_1}, \ldots, x^{a_s})$, the \emph{saturated set} for a monomial $x^b$ \wrt $(x^{a_{1}},\dots,x^{a_{s}})$ is defined as
  \begin{equation}
    \label{eq:2}
    \Sat(x^b ; x^{a_{1}},\dots,x^{a_{s}})= \{i \in \{1, \ldots, s\} : x^{a_i} \divides x^b\}.
  \end{equation}
  A set $J \subseteq \{1, \ldots, s\}$ is said to be \emph{saturated} w.r.t. $(x^{a_1}, \ldots, x^{a_s})$ if $J = \Sat(M(J) ; x^{a_{1}},\dots,x^{a_{s}})$
  where $M(J) = \lcm(x^{a_{i}} : i \in J)$.
  When clear from the context, we shall omit the list of monomials and write $J_{x^{b}} = \Sat(x^{b})$.
  %
  %  We call a set $J \subset \{1,\dots,s\}$ \emph{basic} if $J$ is saturated and contains $s$.
\end{definition}

Given a tuple of polynomials $(f_{1},\dots,f_{s})$ and a set of indices $J \subset \{1,\dots,s\}$, we denote by $I_{J}$ the ideal of $\RR$ defined as
\(  I_{J} := \langle \LC(f_{i}) : i \in J \rangle \)
and we define $M(J) = \lcm(\LM(f_{1}),\dots,\LM(f_{s}))$.
% Note that if $i,j \in \{1,\dots,s\}$, then $M(i,j) = M(\{i,j\})$ and $M(i) = M(\{i\})$.

\begin{definition}
  \label{def:reductionthatweneed}
  Let $f \in \AA$.
  %Let $f_{1},\dots,f_{s}$ be such that $\LM(f_{i}) \divides \LM(f)$ for all $i$, and let $a_{i}$ be such that $x^{a_{i}}\LM(f_{i}) = \LM(f)$.
  Let $f_{1},\dots,f_{s} \in \AA$ and $x^{a_{1}},\dots,x^{a_{s}} \in \Mon(\AA)$ be such that $x^{a_{i}}\LM(f_{i}) = \LM(f)$ for all $i$.
  We say that we can \emph{weakly top reduce} % (or \emph{reduce})
  $f$ by $f_1, \ldots, f_s \in \AA$ if there exist $l_{1}, \dots,l_{s}$ in $\RR$ such that
  \begin{equation}
    \label{eq:1}
    \LT(f) = \sum_{i=1}^{s} l_i x^{a_i}\LT({f_i}).
  \end{equation}
  In our setting we will only perform top reductions, so we will simply call them \emph{weak reductions}.

  The outcome of the total reduction step is $g = f - \sum_{i=1}^{s} l_{i}x^{a_{i}}f_{i}$ and the $f_{i}$'s are called the \emph{weak reducers}.
  A polynomial $f \in \AA$ is \emph{weakly reducible} if it can be weakly reduced, otherwise it is \emph{weakly reduced}.
\end{definition}

If $g$ is the outcome of reducing $f$, then $\LM(g) \prec \LM(f)$.

\begin{example}
  \label{sec:Polynomial-reduction-1}
  Consider the polynomial ring $\ZZ[x,y]$ with the lex ordering $y \prec x$, 
  and consider the set $F=\{f_1,f_2,f_3,f_4,f_5\}$ in $\mathbb{Z}[x,y]$, with $f_1 = 4xy +x, f_2 = 3x^2+y,f_3=5x,f_4=4y^2+y,f_5=5y$.
  Let $f = 2xy + 13y -5$. We have $\LT(f) = 2xy = (2y)  \LT (f_3) - (2) \LT(f_1)$. This implies we can weakly reduce $f$ with $f_1, f_3$ to get $g = f - (2y  f_3 - 2 f_1) = 2x+13y-5$.
  %We denote the reduction as $f \xrightarrow{\{f_1,f_3\}} g:= 2x+13y-5$.
\end{example}

%With the notion of reduction in place we can directly extend the multivariate division algorithm to $\AA$. 
We are now prepared to give the definition of (weak) Gr\"obner bases for an ideal in $\AA$.
\begin{definition}
  Let $\mathfrak{a}$ be an ideal in $\AA$ and $G=\{g_1,\ldots, g_t\}$ be a finite set of nonzero polynomials in $\mathfrak{a} $. 
  The set $G$ is called a \emph{weak Gr\"obner basis} of $\mathfrak{a}$ in $\AA$ if it satisfies the following equivalent properties.
  \begin{enumerate}
    \item $\langle\LT(G) \rangle= \langle \LT(\mathfrak{a}) \rangle$;
    \item for any $f \in \mathfrak{a}$, $f$ is weakly reducible modulo $G$;
    \item for any $f \in \AA$, $f \in \mathfrak{a}$ if and only if $f$ weakly reduces to $0$ modulo $G$.
  \end{enumerate}
  % \begin{displaymath}
  
  %\end{displaymath}
  % The set $G$ is called a \emph{strong Gröbner basis} of $\mathfrak{a}$ if and only if for any polynomial $f \in \mathfrak{a}$, there exists $g_{i} \in G$ such that $\LT(g_{i}) \divides \LT(f)$.
\end{definition}

% \begin{example}
%   Consider the set $G=\{2x,3x\} \subset \ZZ[x]$, it is a weak Gröbner basis of $\mathfrak{a} := \langle G \rangle$, but it is not a strong Gröbner basis: $5x \in \mathfrak{a}$, and it is divisible by neither $2x$ nor $3x$.
% \end{example}

% The comparison between strong and weak Gröbner bases stems from the following proposition:
% \begin{proposition}
%   Let $\mathfrak{a} \subseteq \AA$ be an ideal, and $G_{s}$ be a strong Gröbner basis of $\mathfrak{a}$.
%   Then $G_{s}$ is a weak Gröbner basis of $\mathfrak{a}$.

%   Conversely, assume that $A = R[X_{1},\dots,X_{n}]$ where $R$ is a principal ideal ring.
%   Let $\mathfrak{a} \subset A$ be an ideal, and let $G_{w} = \{g_{1},\dots,g_{t}\}$ be a weak Gröbner basis of $\mathfrak{a}$.
%   Given two polynomials $f,g \in R$, let $u,v \in R$ be such that
%   \begin{equation}
%     \label{eq:10}
%     u \,\LC(f) + v \,\LC(g) = \gcd(\LC(f),\LC(g)),
%   \end{equation}
%   and define
%   \begin{equation}
%     \label{eq:4}
%     \GPol(f,g) = u \,\frac{\lcm(\LM(f),\LM(g))}{\LM(f)}f
%     + v \,\frac{\lcm(\LM(f),\LM(g))}{\LM(g)}g.
%   \end{equation}
%   Then the completion of $G_{w}$, defined as
%   \begin{equation}
%     \label{eq:13}
%     C(G_{w}) = G_{w} \cup \{\GPol(f,g) : f,g \in G_{w}\}
%   \end{equation}
%   is a strong Gröbner basis of $\mathfrak{a}$.
% \end{proposition}
% \begin{proof}
%   The fact that strong Gröbner bases are weak Gröbner bases is immediate from their definitions.
%   The converse is a corollary of \cite[Cor. of Th.~4]{Moller:1988:grobnerrings2}.
% \end{proof}

\begin{remark}
  Even though the notion of weak Gröbner bases is a weaker notion than that of strong Gröbner bases, one can use weak polynomial reductions to test for ideal membership.
  One can also define normal forms modulo a polynomial ideal.
  However, for those normal forms to be unique, one needs to perform further reductions on the coefficients, to ``coset representative form'', and one needs to perform reductions on non-leading coefficients as well~\cite[Th.~4.3.3]{Adams:1994:introtogrobnerbasis}.
  Finally, note that, over a PID, one can easily recover a strong basis from a weak one~\cite[Th.~4]{Moller:1988:grobnerrings2}.
\end{remark}

\subsection{Möller's algorithm for general rings}
\label{sec:Moll-algor-gener}

In this section, we present Möller's (weak) algorithm~\cite{Moller:1988:grobnerrings2} for computing Gröbner bases over rings satisfying the conditions of Sec.~\ref{sec:Computations-A}.
This algorithm is analogous to Buchberger's algorithm for rings, where the polynomial reduction is as defined above and S-polynomials are replaced with linear combinations of several (possibly more than $2$) polynomials, 
defined in the following sense.

Consider a set $\{g_{1},\dots,g_{t}\}$ of polynomials.
For $i \in \{1,\dots,t\}$, let $M(i) = \LM(g_{i})$, $C(i) = \LC(g_{i})$ and $T(i) = \LT(g_{i})$.
Let $J$ be a saturated subset of $\{1,\dots,t\}$ \wrt $\{M(1),\dots,M(t)\}$.
Recall that $M(J) = \lcm(M(j) : j \in J)$.
By definition, for all $j \in J$, $M(j)$ divides $M(J)$ and $J$ is maximal with this property.

Let $s \in J$ and $J^{\ast} = J \setminus \{s\}$.
Similar to the idea behind S-polynomials,
we want to eliminate the leading term  $C(s) M(J)$ of $\frac{M(J)}{M(s)}g_{s}$.
This can only be done if we multiply $\frac{M(J)}{M(s)}g_{s}$ by an element of the saturated ideal $\langle C(i) : i \in J, i \neq s \rangle : \langle C(s) \rangle$.
We want to consider all such multipliers, so we need to consider generators of this saturated ideal.

Let $c$ be such a generator, by definition $cC(s) \in \langle C(i) : i \in J, i \neq s \rangle$
so there exists $(b_{i})_{i \in J^{\ast}} \in \RR$ such that
% \begin{equation}
%   \label{eq:8}
\(c C(s) = \sum_{i \in J^{\ast}} b_{i} C(i)\).
%\end{equation}
The (weak) S-polynomial associated with $J$, $s$ and $c$, for some suitable $(b_{i})$, is defined as
\begin{equation}
  \label{eq:6}
  \SPol((g_{i})_{i \in J^{\ast}};g_{s};c% ,(b_{i})_{i \in J^{\ast}}
  ) = c \frac{M(J)}{M(s)}g_{s} - \sum_{i \in J^{\ast}} b_{i} \frac{M(J)}{M(i)}g_{i}.
\end{equation}

If the ring $\RR$ is a PID, the saturated ideal $\langle C(i) : i \in J, i \neq s \rangle : \langle C(s) \rangle$ admits a unique generator $c$ % , defined up to an invertible factor by
% \begin{equation}
%   \label{eq:78}
%   c = \frac{\lcm(\gcd(\{C(j) : j \in J^{\ast}\}),C(s))}{C(s)}
%   = \frac{\gcd(\{C(j) : j \in J^{\ast}\})}{\gcd(\{C(j) : j \in J\})}
% \end{equation}
and we define
\begin{gather}
  \label{eq:75}
  C(J;s) = \LC(c g_{s}) = c C(s) = \lcm(\gcd(\{C(j) : j \in J^{\ast}\}),C(s)) \\
  \label{eq:76}
  T(J;s) = \LT(c g_{s}) = C(J;s) M(J).
\end{gather}
Then the S-polynomial associated with $J$, $s$, $c$, for some suitable $(b_{i})$, can be written in the following form
\begin{equation}
  \label{eq:77}
  \SPol((g_{i})_{i \in J^{\ast}};g_{s}) = \frac{T(J;s)}{T(s)} g_{s} - \sum_{i \in J^{\ast}} b_{i} \frac{M(J)}{M(i)} g_{i}.
\end{equation}

% \begin{remark}
%   Weak S-polynomials are syzygy polynomials $\modpol{\Sigma}$ for some homogeneous syzygy $\Sigma \in \Syz(G)$.
% \end{remark}

Using this definition of S-polynomials, we recall Möller's algorithm (Algo.~\ref{algo:Moller}) for computing a Gröbner basis of an ideal given by a set of generators over $\RR$.
The correctness and termination of this algorithm are shown in ~\cite[Th.~4.2.8 and Th.~4.2.9]{Adams:1994:introtogrobnerbasis}.

\begin{algorithm}
  \caption{Möller's  algorithm~\cite[Algo.~4.2.2]{Adams:1994:introtogrobnerbasis},
    \cite{Moller:1988:grobnerrings2}}
  \begin{algorithmic}\label{algo:Moller}
    \STATE \textbf{Input} 
    $F=\{f_1,\ldots,f_m\} \subseteq \AA \setminus \{0\}$, $\prec$ a monomial order on $\AA$ 
    \STATE \textbf{Output} $G = \{g_1, \ldots, g_t\}$, a Gr\"obner basis of $\langle F \rangle$\\[5pt]
    \STATE $G \leftarrow F$, $\sigma \leftarrow 1$, $s \leftarrow m$
    \WHILE {$ \sigma \neq s $}
    \STATE $\mathcal{S} \leftarrow
    \big\{ \text{subsets of } \{1, \ldots, \sigma\} \text{ saturated }
    \text{\wrt{} } \LM(g_1), \ldots,
    \LM(g_\sigma)
    \text{ which contain } \sigma \big\}
    $
    
    \FOR {\textbf{each} $J \in \mathcal{S}$}
    \STATE $M(J) \leftarrow \mathrm{lcm}(\LM(g_j): j \in J)$
    \STATE $J^{\ast} \leftarrow J \setminus \{\sigma\}$
    \STATE $\{c_{1},\dots,c_{\mu}\} \leftarrow \algosatideal(\{\LC(g_{j}) : 
    {j \in J^{\ast}}\}, \LC(g_{\sigma}))$
    \STATE \COMMENT {$\langle c_{1},\dots,c_{\mu} \rangle = \langle C(j) : j \in J^{\ast} \rangle : \langle C(\sigma) \rangle$}
    
    \FOR[For PIDs, $\mu=1$]{$i \in \{1, \ldots, \mu\}$} 
    \STATE $p \leftarrow \SPol((g_{j})_{j \in J^{\ast}};g_{\sigma};c_{i})$
    \STATE $r \leftarrow \algoreduce(p,G)$
    \IF{$r \neq 0$} 
    \STATE $g_{s+1} \leftarrow r$, $G \leftarrow G \cup \{g_{s+1}\}$, $s \leftarrow s+1$
    \ENDIF
    \ENDFOR
    \ENDFOR
    \STATE $\sigma \leftarrow \sigma+1$
    \ENDWHILE
    \RETURN $G$
  \end{algorithmic}
\end{algorithm}

\begin{algorithm}
  \caption{\algoreduce (Def.~\ref{def:reductionthatweneed})}
  \label{algo:reduce}
  \begin{algorithmic}
    \STATE \textbf{Input} 
    $G=\{g_1,\ldots,g_s\} \subseteq \AA \setminus \{0\}$, $\prec$ a monomial order on $\AA$  
    \STATE \textbf{Output} $r$ result of reducing $p$ modulo $G$\\[5pt]
    \STATE $\textit{reducible} \leftarrow \true$, $r \leftarrow p$
    \WHILE {$\textit{reducible}$ \textbf{is} \true}
    \STATE $J \leftarrow \{j \in \{1,\dots,s\} : \LM(g_{j}) \divides \LM(r)\}$
    \STATE $\textit{reducible},(k_{j})_{j \in J} \leftarrow \algocoefs(\{\LC(g_{j}) : j \in J\}, \LC(r))$
    \STATE \COMMENT {If \textit{reducible} is \true, then $\LC(r) = \sum_{j \in J} k_{j}\LC(g_{j})$}
    \IF {$\textit{reducible}$}
    \STATE $r \leftarrow  r - \sum_{j \in J} k_{j} \frac{\LM(r)}{\LM(g_j}g_{j}$
    \ENDIF
    \ENDWHILE
    % \STATE $J \leftarrow \{j \in \{1,\dots,s\} : \LM(g_{j}) \divides \LM(r)\}$
    % \STATE $c \leftarrow \algocosetrep(\LC(r),\{\LC(g_{j}) : j \in J\})$
    % \STATE \COMMENT {$c \in C_{J}$ and $\LC(r) \equiv c \mod I_{J}$}
    % \STATE $\textit{test}, (k_{j})_{j \in J} \leftarrow \algocoefs(\{\LC(g_{j}) : j \in J\}, \LC(r)-c)$
    % \STATE \COMMENT {\textit{test} is \true and $c = \LC(r) - \sum_{j \in J} k_{j}\LC(g_{j})$}
    % \STATE $r \leftarrow r - \sum_{j \in J} k_{j}g_{j}$
    \RETURN $r$
  \end{algorithmic}
\end{algorithm}

\largevspace
\section{Signatures in $\AA^m$}
\label{sec:Sign-Based-Algor}

We consider the free $\AA$-module $\AA^{m}$ with basis $\bfe_1, \ldots, \bfe_m$.
A term (\resp{} monomial) in $\AA^{m}$ is $kx^{a}\bfe_{i}$ (\resp{} $x^{a}\bfe_{i}$) for some $k \in \RR \setminus \{0\}$, $x^{a} \in \Mon(\AA)$, $i \in \{1,\dots,m\}$.
In this paper, terms in $\AA^{m}$ are ordered using the Position Over Term (POT) order, defined by
\begin{displaymath}
  k x^a\bfe_i \prec lx^b\bfe_j \iff i \lneq j (\text{ or } i = j \text{ and } x^a \prec x^b).
\end{displaymath}
Given two terms $kx^{a}\bfe_{i}$ and $lx^{b}\bfe_{j}$ in $\AA^{m}$, we write $kx^{a}\bfe_{i} \simeq lx^{b}\bfe_{j}$ if they are incomparable, \ie{} if $a=b$ and $i=j$.

Given a set of polynomials $f_{1},\dots,f_{m} \in \AA$, elements of
$A^{m}$ encode elements of the ideal $\langle f_{1},\dots,f_{m}
\rangle$ through the $\AA$-module homomorphism $\bar{\cdot} : \AA^{m}
\to \AA$, defined by setting $\bbfe_{i} = f_{i}$ and extending
linearly to $\AA^{m}$.
In particular, $\overline{\sum_{i=1}^{m} p_{i}\bfe_{i}} = \sum_{i=1}^{m} p_{i}f_{i}$.

We recall the concept of signatures in $\AA^m$.
Let $\bfp = \sum_{i=1}^{m} p_{i} \bfe_{i}$ be a module element.
Under the POT ordering, the signature of $\bfp$ is $\sig(\bfp) = \LT(p_{i})\bfe_{i}$ where $i$ is such that $p_{i+1}{=}\dots{=}p_{m}=0$ and $p_{i}\neq 0$.
Signatures are of the form $kx^a\bfe_i$, where $k \in \RR, x^a \in \mathrm{Mon}(\AA)$ and $\bfe_i$ is a standard basis vector.

Note that we have two ways of comparing two similar signatures $\sig(\bfalpha) = kx^a\bfe_i$ and $\sig(\bfbeta) = lx^b\bfe_j$.
We write $\sig(\bfalpha)=\sig(\bfbeta)$ if $k=l$, $a=b$ and $i=j$, and we write $\sig(\bfalpha) \simeq \sig(\bfbeta)$ if $a=b$ and $i=j$, $k$ and $l$ being possibly different.
If $\RR$ is a field, one can assume that the coefficient is $1$, and so this distinction is not important.

Note also that when we order signatures, we only compare the corresponding module monomials, and disregard the coefficients.
This is a different approach from the one used in \cite{Eder:2017:EuclideanRings}, where both signatures and coefficients are ordered.

Given a tuple $(\bfalpha_{1},\dots,\bfalpha_{s})$ of module elements in $\AA^{m}$ and $i,j \in \{1,\dots,s\}$, we shall frequently denote $S(i) = \sig(\bfalpha_{i})$ for brevity.
% \begin{gather}
%   \label{eq:65}
%   \Sig(i) = \sig(\bfalpha_{i}). % \\
%   % \Sig(i,j) = \sig(\SPol(\bfalpha_{i},\bfalpha_{j})).
% \end{gather}
% A S-pair $(i,j)$ is \emph{regular} if $\frac{M(i,j)}{M(i)}\Sig(i) \nsimeq \frac{M(i,j)}{M(j)}\Sig(j)$,
% and in this case
% \begin{equation}
%   \label{eq:79}
%   \Sig(i,j) = \max\left( \frac{M(i,j)}{M(i)}\Sig(i),\frac{M(i,j)}{M(j)}\Sig(j) \right).
% \end{equation}

In order to keep track of signatures we modify Def.~\ref{def:reductionthatweneed} to introduce the notion of $\sig$-reduction.

\begin{definition}%[$s$-reduction]
  Let $\bfp \in \AA^{m}$.
  We say that we can \emph{signature-reduce} (or \emph{$\sig$-reduce})
  $\bfp$ by $\bfbeta_1, \ldots, \bfbeta_s\in \AA^m$ if
  we can reduce $\bbfp$ by $\bbfbeta_{1},\dots,\bbfbeta_{s}$
  (in the sense of Def.~\ref{def:reductionthatweneed}) and $\sig(x^{a_i}\bfbeta_i) \preceq \sig(\bfp)$ for 
  all $i = 1, \ldots, s$, where $x^{a_{i}}=\frac{\LM(\bbfp)}{\LM(\bbfbeta_{i})}$.
  We can define similarly $\sig$-reduced module elements.% and total $\sig$-reductions.
  
  If $ \sig(x^{a_i}\bfbeta_i) \simeq \sig(\bfp)$ for some $i$ in the above $\sig$-reduction,
  then it is called a \emph{singular} $\sig$-reduction step.
  Otherwise it is called a \emph{regular} $\sig$-reduction step.
  
  If $\sig(x^{a_{i}}\bfbeta_{i}) \simeq \sig(\bfp)$ for exactly one $i$ and it is actually an equality
  $\sig(l_{i}x^{a_{i}}\bfbeta_{i}) = \sig(\bfp)$, it is called a \emph{1-singular} $\sig$-reduction step.
\end{definition}

\begin{remark}
  For simplicity, we only carry out weak top reductions, and in particular all $\sig$-reductions are weak top $\sig$-reductions.
  But performing regular $\sig$-reduction to eliminate trailing terms does not affect the correctness of the algorithm.
\end{remark}

Just like $\sig$-reduction over fields, one can interpret $\sig$-reduction as polynomial reduction with an extra condition on the signature of the reducers.
The difference with fields is that in $\RR[x_1, \dots,x_n]$ polynomial reduction is defined differently from the classic polynomial reduction.
Additionally, in the case of fields, all singular $\sig$-reductions are $1$-singular.

The outcome $\bfq$ of $\sig$-reducing $\bfp$ is such that $\LT(\bbfq) \prec \LT(\bbfp)$ and $\sig(\bfq) \preceq \sig(\bfp)$.
If $\bfq$ is the result of a regular $\sig$-reduction, then $\sig(\bfq) = \sig(\bfp)$.
In signature-based algorithms, in order to keep track of the signatures of the basis elements, we only allow regular $\sig$-reductions.
Later, we will also prove that elements which are $1$-singular $\sig$-reducible can be discarded.
\begin{remark}
  In \cite[Ex.~2]{Eder:2017:EuclideanRings}, a signature drop appears when $\sig$-reducing an element of signature $6y\bfe_2$ with an element of signature $y\bfe_2$ causing the signature to ``drop'' to $5y\bfe_2$. 
  With our definition, since we only compare the module monomial part of the signatures, this is a (forbidden) singular $\sig$-reduction. 
  %So this kind of reductions will not be performed in the algorithms. 
\end{remark}
\begin{definition}%(Signature Gr\"obner Basis)
  Let $\mathfrak{a} = \langle f_1, \dots, f_m\rangle$ be an ideal in $\AA$. A finite subset $\mathcal{G}$ of $\AA^m$ is a (weak) \emph{signature Gr\"obner basis} (or \sGB for short) of $\mathfrak{a}$ if all $\bfu \in \AA^m$ $\sig$-reduce to zero mod $\mathcal{G}$.

  Given a signature $\mathbf{T}$, we say that $\mathcal{G}$ is a (partial) signature Gröbner basis up to $\textbf{T}$ if all $\bfu \in \AA^{m}$ with signature $\prec \bfT$ $\sig$-reduce to 0 mod~$\mathcal{G}$.
\end{definition}

Using this definition, we can give the following characterization of 1-singular reducibility, which allows for an easy algorithmic test.

\begin{lemma}[Characterization of $1$-singular $\sig$-reducibility]
  \label{lemme:correctness-1sing}
  Let $\mathcal{G}=\{\bfalpha_{1},\dots,\bfalpha_{s}\} \subset \AA^{m}$ and $\bfp \in \AA^{m}$ such that % $\bfp$ is regular $\sig$-reduced modulo $\mathcal{G}$ and
  $\mathcal{G}$ is a signature Gröbner basis up to signature $\sig(\bfp)$.
  Then $\bfp$ is 1-singular $\sig$-reducible if and only if there exist $j \in \{1,\dots,s\}$ and $k \in \RR$ and a monomial $x^{a}$ in $\AA$ such that
  $\LM(x^{a}\bbfalpha_{j}) = \LM(\bbfp)$ and $kx^{a}\sig(\bfalpha_{j})=\sig(\bfp)$.
\end{lemma}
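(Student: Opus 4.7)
The forward direction is essentially a matter of unpacking definitions. If $\bfp$ is $1$-singular $\sig$-reducible, then by the definition right above the lemma there is a reduction $\LT(\bbfp) = \sum_i l_i x^{a_i}\LT(\bbfbeta_i)$ with $\bfbeta_i \in \mathcal{G}$, and a distinguished index $i_{0}$ for which $\sig(l_{i_0}x^{a_{i_0}}\bfbeta_{i_0})=\sig(\bfp)$. Setting $\bfalpha_{j}=\bfbeta_{i_{0}}$, $k=l_{i_{0}}$ and $x^{a}=x^{a_{i_{0}}}$ immediately supplies the data required by the lemma: the signature equality is built in, and $\LM(x^{a}\bbfalpha_{j})=\LM(\bbfp)$ because by Def.~\ref{def:reductionthatweneed} every reducer in a weak top reduction satisfies $x^{a_i}\LM(\bbfbeta_i)=\LM(\bbfp)$.

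For the converse, the plan is to produce a valid $1$-singular $\sig$-reduction of $\bfp$ using $\bfalpha_{j}$ as the distinguished reducer. I consider the module element
\begin{equation*}
  \bfp' := \bfp - k x^{a} \bfalpha_{j}.
\end{equation*}
The hypothesis $kx^{a}\sig(\bfalpha_{j})=\sig(\bfp)$ (equality, not just $\simeq$) forces the leading coefficients in the signature components to cancel exactly, so $\sig(\bfp') \prec \sig(\bfp)$. The image $\bbfp'=\bbfp - k x^{a}\bbfalpha_{j}$ has either leading monomial strictly smaller than $\LM(\bbfp)$ (if $k\LC(\bbfalpha_{j})=\LC(\bbfp)$) or the same leading monomial $\LM(\bbfp)$ but a new leading coefficient $\LC(\bbfp)-k\LC(\bbfalpha_{j})$.

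In the first case, taking $\bfalpha_{j}$ as the single reducer already yields a $1$-singular $\sig$-reduction and we are done. In the second, I use the assumption that $\mathcal{G}$ is a $\sig$-Gröbner basis up to $\sig(\bfp)$: since $\sig(\bfp') \prec \sig(\bfp)$, $\bfp'$ $\sig$-reduces to $0$ modulo $\mathcal{G}$, hence in particular it admits a weak top $\sig$-reduction step. This gives reducers $\bfbeta_{1},\dots,\bfbeta_{t} \in \mathcal{G}$ with $x^{a_{i}}\LM(\bbfbeta_{i}) = \LM(\bbfp')=\LM(\bbfp)$, coefficients $l_{i}\in\RR$, and $\sig(x^{a_{i}}\bfbeta_{i}) \preceq \sig(\bfp') \prec \sig(\bfp)$, such that $\LT(\bbfp')=\sum_{i} l_{i}x^{a_{i}}\LT(\bbfbeta_{i})$. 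Adding back $kx^{a}\LT(\bbfalpha_{j})$ yields
\begin{equation*}
  \LT(\bbfp) = k x^{a}\LT(\bbfalpha_{j}) + \sum_{i=1}^{t} l_{i}x^{a_{i}}\LT(\bbfbeta_{i}),
\end{equation*}
a valid weak top reduction in which $\bfalpha_{j}$ is the unique reducer with signature $\simeq \sig(\bfp)$ (all other contributions are strictly smaller in signature), and for $\bfalpha_{j}$ the signature equality $\sig(kx^{a}\bfalpha_{j})=\sig(\bfp)$ holds by hypothesis. This is exactly a $1$-singular $\sig$-reduction step.

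The main subtlety I anticipate is the bookkeeping of the signature coefficients: the hypothesis $kx^{a}\sig(\bfalpha_{j}) = \sig(\bfp)$ is an equality of terms (including coefficients), which is precisely what guarantees the cancellation $\sig(\bfp')\prec\sig(\bfp)$ and is stronger than the corresponding $\simeq$ statement; had I only had $\simeq$, the residual term in $\bfp'$ could still carry signature $\simeq \sig(\bfp)$ and the induction on the sGB hypothesis would not apply. Once this distinction is tracked carefully, the rest of the argument is a straightforward appeal to the partial signature Gröbner basis property.
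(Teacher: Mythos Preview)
Your proof is correct and follows essentially the same approach as the paper's: both directions match, and in the converse you consider $\bfp' = \bfp - kx^{a}\bfalpha_{j}$, use the exact signature equality to get $\sig(\bfp') \prec \sig(\bfp)$, invoke the partial $\sig$-Gröbner basis property to top-$\sig$-reduce $\bfp'$, and combine the two reductions into a $1$-singular $\sig$-reduction of $\bfp$. Your explicit discussion of why genuine equality (not just $\simeq$) of signatures is needed is a helpful addition, but the overall argument is the same.
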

\begin{proof}
  If $\bfp$ is 1-singular $\sig$-reducible, then such $j$, $k$ and $x^{a}$ exist by definition.
  Conversely, given such $j$, $k$ and $x^{a}$, if $kx^{a}\LT(\bbfalpha_{j}) = \LT(\bbfp)$, then $\bfp$ is 1-singular $\sig$-reducible.
  If not, then $\LM(\bbfp-kx^{a}\bbfalpha_{j}) = \LM(\bbfp)$.
  Furthermore, $\sig(\bfp - kx^{a}\bfalpha_{j}) \prec \sig(\bfp)$, so $\bfp - kx^{a}\bfalpha_{j}$ $\sig$-reduces to $0$.
  In particular, there exist $(\mu_{i})_{i \in \{1,\dots,s\}}$ terms in $\AA$ such that for all $i$ with $\mu_{i} \neq 0$, $\LM(\mu_{i}\bbfalpha_{i})=\LM(\bbfp-kx^{a}\bbfalpha_{j})$,
  $\LT(\bbfp - kx^{a}\bbfalpha_{j}) = \sum_{i=1}^{s} \mu_{i}\LT(\bbfalpha_{i})$
  and $\mu_{i}\sig(\bfalpha_{i}) \preceq \sig(\bfp - kx^{a}\bfalpha_{j}) \prec \sig(\bfp)$.
  So putting together the two $\sig$-reductions, we obtain that
  \begin{equation}
    \label{eq:11}
    \LT(\bbfp)
    = kx^{a}\LT(\bbfalpha_{j}) + \sum_{i=1}^{s} \mu_{i}\LT(\bbfalpha_{i})
  \end{equation}
  and this is a 1-singular $\sig$-reduction of $\bfp$.
\end{proof}

We now define (weak) \emph{semi-strong signature Gröbner bases}, which form a subclass of weak $\sig$-Gröbner bases.
In the case of rings, it is easier to compute them than to directly compute weak $\sig$-Gröbner bases.

\begin{definition}
  Let $\mathfrak{a} = \langle f_1, \dots, f_m\rangle$ be an ideal in $\AA$. A finite subset $\mathcal{G}$ of $\AA^m$ is a \emph{semi-strong signature Gr\"obner basis} (or \sssGB for short) of $\mathfrak{a}$ if, for all $\bfu \in \AA^m$,
  \begin{itemize}
    \item either $\bfu$ is (weakly) regular $\sig$-reducible modulo $\mathcal{G}$;
    \item or $\bfu$ is 1-singular $\sig$-reducible modulo $\mathcal{G}$;
    \item or $\bbfu = 0$.
  \end{itemize}
  Given a signature $\bfT$, \emph{semi-strong signature Gröbner bases up to $\bfT$} are defined similarly by only considering module elements with signature $\prec \bfT$.
\end{definition}

\begin{lemma}[{\cite[Lem.~4.6]{eder:2017:survey}}]
  Let $\mathfrak{a} = \langle f_{1},\dots,f_{m} \rangle$ be an ideal in $\AA$ and let $\mathcal{G} \subset \AA^{m}$.
  Then
  \begin{enumerate}
    \item If $\mathcal{G}$ is a \sssGB of $\mathfrak{a}$, then $\mathcal{G}$ is a \sGB of $\mathfrak{a}$.
    \item If $\mathcal{G}$ is a \sGB of $\mathfrak{a}$, then $\{\overline{\bfalpha}: \bfalpha \in \mathcal{G}\}$ is 
    a Gr\"obner basis of~$\mathfrak{a}$. 
  \end{enumerate}
\end{lemma}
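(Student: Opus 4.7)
The plan is to handle the two implications separately. For (1), I use a Noetherian induction on the pair $(\sig(\bfu), \LT(\bbfu))$, exploiting the fact that the \sssGB{} hypothesis forces either a signature-preserving reduction that strictly decreases the leading term, or a 1-singular reduction that strictly decreases the signature. For (2), the implication is a direct lifting argument: a $\sig$-reduction of a module pre-image of $f$ projects under $\bar{\cdot}$ to a weak polynomial reduction of $f$.

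For (1), I will show by well-founded induction on the pair $(\sig(\bfu), \LT(\bbfu))$, ordered lexicographically by $\prec$, that every $\bfu \in \AA^m$ $\sig$-reduces to $0$ modulo $\mathcal{G}$. The base case $\bbfu = 0$ is immediate. Otherwise, the \sssGB{} hypothesis gives two cases. If $\bfu$ is regularly $\sig$-reducible, one reduction step produces $\bfq$ with $\sig(\bfq) = \sig(\bfu)$ and $\LT(\bbfq) \prec \LT(\bbfu)$, so the induction hypothesis yields a $\sig$-reduction of $\bfq$, and hence of $\bfu$, to $0$. If $\bfu$ is 1-singular $\sig$-reducible, the exact coefficient-level cancellation of the top signature term (forced by the equality $\sig(l_{i} x^{a_{i}}\bfbeta_{i}) = \sig(\bfu)$ for the unique singular reducer) ensures $\sig(\bfq) \prec \sig(\bfu)$, and again the induction hypothesis concludes. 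The main subtlety is to verify that $(\sig(\bfu), \LT(\bbfu))$ lies in a well-founded order (POT is well-founded on module monomials up to the $\simeq$ equivalence on coefficients, and $\prec$ is a well-ordering on $\Mon(\AA)$), and to check that 1-singular reducibility, as opposed to arbitrary singular reducibility, actually produces this strict signature drop; this is precisely where the ``$1$-singular'' strengthening is essential.

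For (2), take $f \in \mathfrak{a}$ and write $f = \sum_{i=1}^{m} p_i f_i$, so $\bfp := \sum_i p_i \bfe_i \in \AA^m$ satisfies $\bbfp = f$. Since $\mathcal{G}$ is a \sGB, $\bfp$ $\sig$-reduces to $0$ modulo $\mathcal{G}$. Projecting each $\sig$-reduction step via $\bar{\cdot}$ turns this into a sequence of weak polynomial reductions of $\bbfp = f$ by elements of $\{\bbfalpha : \bfalpha \in \mathcal{G}\}$ that ends at $0$. Since $\bbfalpha \in \mathfrak{a}$ for every $\bfalpha \in \mathcal{G}$, the equivalent characterization of weak Gröbner bases (``every element of $\mathfrak{a}$ weakly reduces to $0$'') recalled in Section~\ref{sec:Polynomial-reduction} then shows that $\{\bbfalpha : \bfalpha \in \mathcal{G}\}$, with any zero elements removed, is a weak Gröbner basis of $\mathfrak{a}$.
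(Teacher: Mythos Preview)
Your proof is correct and follows the same idea as the paper's: for (1), show that the \sssGB{} hypothesis makes every $\bfu$ with $\bbfu \neq 0$ $\sig$-reducible, and then use a well-foundedness argument to force the reduction chain to terminate at $0$; for (2), give the standard lifting argument, which the paper simply defers to \cite[Lem.~4.1]{eder:2017:survey}.

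One simplification is worth noting. Your lexicographic induction on $(\sig(\bfu), \LT(\bbfu))$ is more machinery than required. Every $\sig$-reduction in this paper---regular or singular---is by definition a weak \emph{top} reduction, so $\LM(\bbfu)$ strictly decreases at each step regardless of what happens to the signature. The paper's proof exploits exactly this: it just observes that the chain of leading monomials is strictly decreasing, so it must terminate, and by the \sssGB{} hypothesis it can only terminate at $\bbfu = 0$. In particular, your claim that the $1$-singular strengthening is ``precisely essential'' here is a slight overstatement: ordinary singular $\sig$-reducibility would already suffice for this lemma, since termination is driven by the drop in $\LM(\bbfu)$, not by the signature drop you carefully arrange. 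The $1$-singular refinement earns its keep elsewhere (Lemma~\ref{lemme:correctness-1sing}, the discard test in Algorithm~\ref{Algorithmsig}, and the proof of Lemma~\ref{supportinglemma}), but not in this statement.
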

\begin{proof}  The definition of a semi-strong Gröbner basis implies that all $\bfu \in \AA^{m}$ with $\bbfu \neq 0$ are $\sig$-reducible modulo $\mathcal{G}$, and so such $\sig$-reductions form a chain which can only terminate at $0$.

  The proof that a signature Gröbner basis is a Gröbner basis is classical~\cite[Lem.~4.1]{eder:2017:survey}.
\end{proof}

In order to compute signature Gröbner bases, similar to the case of fields, we will restrict the computations to regular S-polynomials.
For this purpose, we first introduce the signature of a set of indices, and regular sets.
\begin{definition}
  Let $\mathcal{G} = (\bfalpha_{1},\dots,\bfalpha_{t})$ be a tuple of module elements in $A^{m}$ and a set $J \subseteq \{1, \dots, t\} $.
  For $i\in \{1,\dots,t\}$, let $M(i) = \LM(\bbfalpha_i)$, and $S(i) = \sig(\bfalpha_i)$.
  The \emph{presignature} of $J$ is defined as 
  \begin{equation}
    S_{J} = \max_{s \in J} \left\{ \frac{M(J)} { M(s)} \Sig(s)  \right\}.
  \end{equation}
  %where $M(J) = \lcm( \LM(\bbfalpha_{i}): i \in J)$.
  
  We say that $J$ is a \emph{regular set} if there exists exactly one $s \in J$
  such that $S_{J} \simeq \frac{M(J)} {M(s)} \sig(\bfalpha_{s})$.
  The index $s$ is called the \emph{signature index} of $J$.
  We say that $J$ is a \emph{regular saturated set} if $J \setminus \{s\}$ contains all $j$ such that $M(j) \divides M(J)$ and $\frac{M(J)} {M(j)}\Sig(j) \prec S_{J}$.
\end{definition}

Note that given a regular set $J$, one can always compute a regular saturated set $J'$ containing $J$, by adding those indices $j$ such that $M(j) \divides M(J)$ and $\frac{M(J)} { M(j)}\Sig(j) \prec S_{J}$.

\begin{definition}
  Let $(\bfalpha_{1},\dots,\bfalpha_{t})$ be a tuple of module elements in $A^{m}$.
  For $i \in \{1,\dots,t\}$, let $M(i) = \LM(\bbfalpha_{i})$, $C(i) = \LC(\bbfalpha_{i})$ and $S(i) = \sig(\bfalpha_{i})$.
  Let $J \subset \{1,\dots,t\}$ be a regular saturated set with signature index $s$, and let $J^{\ast} = J \setminus \{s\}$.
  Let $c$ be an element of a family of generators of $\langle C(j) : j \in J^{\ast}\rangle : \langle C(s)\rangle$.
  %
  % For each set $J\subseteq\{1,\dots,t\}$ containing $\sigma$, let $M(J) = \mathrm{lcm}(\LM(\overline{\bfalpha_j}): j \in J)$ and $c$ be an element of $\langle \LC(\overline{\bfalpha_j}) : j \in J, j \neq \sigma \rangle : \langle \LC(\overline{\bfalpha_\sigma})\rangle$.
  Let $(b_j)_{j \in J^{\ast}}$ be a tuple of elements of $\RR$ such that
  \(cC(s) = \sum_{j \in J^{\ast}} b_jC(j).
  \)
  Then the (weak) S-polynomial associated with $J$ and $c$ is defined as
  \begin{displaymath}
    \SPol((g_{j})_{j \in J};c) = c\frac{M(J)}{M(s)}\bfalpha_s - \sum_{j \in J^{\ast}} b_j\frac{M(J)}{M(j)}\bfalpha_j.
  \end{displaymath}
  Its signature is
  \begin{equation}
    \label{eq:86}
    S(J;c) = \sig(\SPol((g_{j})_{j \in J};c)) = c S_{J} = c \frac{M(J)}{M(s)} \Sig(s).
  \end{equation}
  % By extension, we say that an S-polynomial is regular if it was constructed from a regular saturated set  $J$ is a regular saturated set.
\end{definition}

\begin{remark}
  When dealing with regular saturated sets, unlike in Sec.~\ref{sec:Polynomial-reduction}, we do not need to specify which $s \in J$ is singled out when computing the S-polynomial: the only possible $s$ is the signature index of $J$.
\end{remark}

\begin{remark}
  If the coefficient ring is a PID, the ideal $\langle C(j) : j \in J^{\ast}\rangle : \langle C(s)\rangle$ is principal, and $c$ is uniquely determined up to an invertible factor.
  As such, it can be omitted, and in that case we shall simply write $\SPol(J)$ for the S-polynomial, and $S(J)$ for its signature.
  The signature can then be written as
  \(S(J) = \frac{C(J)}{C(s)} S_{J} = \frac{C(J)}{C(s)} \frac{M(J)}{M(s)} \Sig(s).
  \)
\end{remark}

\largevspace
%\largevspace
\section{Adding signatures to Möller's weak algorithm}
\label{sec:Comp-weak-sign}

Recall that all $\sig$-reductions are weak top $\sig$-reductions.
In this section, all S-polynomials are weak S-polynomials.

\subsection{Algorithms}
\label{sec:Algorithms}

Algorithm~\algosigmoller (Algo.~\ref{Algorithmsig}) is a signature-based version of M\"oller's algorithm which, given an ideal $\mathfrak{a}$ in $\RR[x_{1},\dots,x_{n}]$ where $\RR$ is a PID, computes a signature Gr\"obner basis of $\mathfrak{a}$.
% The algorithm is described here without making the assumption that the coefficient ring is a PID and the presentation requires only that the coefficient ring is an effective Noetherian integral domain.

The algorithm proceeds by maintaining a list of regular saturated sets $\mathcal{P}$ and computing weak S-polynomials obtained from these saturated sets.
At each step, it selects the next regular saturated set $J \in \mathcal{P}$ such that $J$ has minimal presignature amongst elements of $\mathcal{P}$.
This ensures that the algorithm computes new elements for the signature Gröbner basis with nondecreasing signatures (Prop.~\ref{lemme:nondecreasing}).

The algorithm then regular $\sig$-reduces these S-polynomials \wrt{} the previous elements, and adds to the basis those which are not equal to 0 and are not 1-singular $\sig$-reducible.
Signature-based Gröbner basis algorithms over fields typically discard all new elements which are singular $\sig$-reducible, but this may be too restrictive for rings.
On the other hand, the proof of Lem.~\ref{supportinglemma} justifies that 1-singular $\sig$-reducible module elements can be safely discarded in the computations.
The correctness of the criterion for 1-singular $\sig$-reducibility (Algo.~\ref{algo:1sing}) was justified in Lem.~\ref{lemme:correctness-1sing}.
The correctness and termination of Algorithm~\algosigmoller are proved in Th.~\ref{regularreducers} and Th.~\ref{lemmatermination} respectively.

\begin{algorithm}
  \caption{Signature-based Möller's algorithm (\algosigmoller)}
  \begin{algorithmic}\label{Algorithmsig}
    \STATE \textbf{Input} $F=\{f_1,\ldots,f_m\} \subseteq \AA \setminus \{0\}$, $\prec$ a monomial order on $\AA$ 
    %with an implicit monomial order $\prec$ on $\AA$ and $POT$ on $\AA^m$
    \STATE \textbf{Output} $\mathcal{G} = \{\bfalpha_{1},\dots,\bfalpha_{t}\}$ a semi-strong signature-Gröbner basis of $% \mathfrak{a}=
    \langle F \rangle$\\[5pt]
    \STATE $\mathcal{G} \leftarrow \emptyset$, $\sigma \leftarrow 0$
    \FOR{$i \in \{1, \ldots, m\}$}
    \STATE $\bfe'_{i} \leftarrow \alname{RegularReduce}(\bfe_{i},\mathcal{G})$
    \IF {$\bbfe'_{i} \neq 0$}
    \STATE $\mathcal{G} = \mathcal{G} \cup \{\bfe'_i\}$, $s \leftarrow |\mathcal{G} |$ \hspace{1cm}\COMMENT{$\bfalpha_{s} = \bfe'_i$}
    \STATE $\mathcal{P} \leftarrow \{\text{Regular saturated sets of $\{1,\dots,s\}$ containing $s$}\}$
    \WHILE {$\mathcal{P} \neq \emptyset$}
    \STATE Pick and remove from $\mathcal{P}$ a regular saturated set with minimal presignature $S_{J}$
    \STATE $M(J) \leftarrow \mathrm{lcm}(\LM(\bbfalpha_j): j \in J)$
    \STATE $\tau \leftarrow \text{signature index of $J$}$
    % $ element of $J$ with maximal $\frac{M(J)}{\LM(\bbfalpha_{\tau})}\sig(\bfalpha_{\tau})$}$
    \STATE $J^{\ast} \leftarrow J \setminus \{\tau\}$
    \STATE $\{c_{1},\dots,c_{\mu}\} \leftarrow \algosatideal(\{\LC(\bbfalpha_{j}) : j \in J^{\ast}\},
    \LC(\bbfalpha_{\tau}))$
    \FOR[For PIDs, $\mu=1$]{$i \in \{1, \ldots, \mu\}$}
    \STATE $p \leftarrow \SPol((g_{j})_{j \in J};c_{i})$
    \STATE $\bfr \leftarrow \alname{RegularReduce}(\bfp,\mathcal{G})$
    \IF {$\bfr \neq 0$ \textbf{and} \textbf{not} $\alname{1-SingularReducible}(\bfr,\mathcal{G})$}
    \STATE $\bfalpha_{s+1} \leftarrow \bfr$ \COMMENT{$\bfalpha_{s+1}$ has signature $S(J) = c_{i}S_{J}$}
    \STATE $\mathcal{G} \leftarrow \mathcal{G} \cup \{\bfalpha_{s+1}\}$
    \STATE $\mathcal{P} \leftarrow \mathcal{P} \cup \{\text{Regular saturated sets of $\{1,\dots,s+1\}$ containing $s+1$}\}$
    \STATE $s \leftarrow s+1$
    \ENDIF
    \ENDFOR
    \ENDWHILE
    \ENDIF
    \ENDFOR
    \RETURN $\mathcal{G}$
  \end{algorithmic}
\end{algorithm}

Due to space constraints, the subroutine \algosigreduce is not explicitly written.
It implements regular $\sig$-reduction of a module element $\bfp$ \wrt{} a set of module elements $\{\bfalpha_{1},\dots,\bfalpha_{s}\}$.
It is a straightforward transposition of \algoreduce (Algo.~\ref{algo:reduce}), with the additional condition that we only consider as reducers of $\bfr$ those $\bfalpha_{j}$ with
\(  % j \in J = \left\{j \in \{1,\dots,s\} : 
\LM(\bbfalpha_{j}) \divides \LM(\bbfr) \text{ and } \frac{\LM(\bbfr)}{\LM(\bbfalpha_{j})}\sig(\bfalpha_{j}) \prec \sig(\bfr).
% \right\}
\)

\begin{remark}
  Note that the algorithms, as presented, perform computations on module elements.
  However, for practical implementations, this represents a significant overhead.
  On the other hand, for any module element $\bfalpha$, we only need its polynomial value $\bbfalpha$ and its signature $\sig(\bfalpha)$.
  Hence the algorithm only needs to keep track of the signatures of elements, which is made possible by the restriction to regular S-polynomials and regular $\sig$-reductions.
\end{remark}

\begin{example}
  \ifappendix
  An example run of Algorithm~\ref{Algorithmsig} is provided in Appendix~\ref{sec:Example-run-Algor}.
  \else
  An example run of Algorithm~\ref{Algorithmsig} is available online\footnote{\url{https://github.com/ThibautVerron/SignatureMoller/ACA18_example.pdf}}.
  \fi
\end{example}

\subsection{Proof of correctness}
\label{sec:Proof-correctness}

In this section we prove the correctness of the algorithms presented in Sec.~\ref{sec:Algorithms}.
The first result states that Algorithm \algosigmoller computes elements of the signature Gröbner basis in nondecreasing order on their signatures.

\begin{proposition}
  \label{lemme:nondecreasing}
  Let $(\bfalpha_{1},\dots,\bfalpha_{t})$ be the value of $\mathcal{G}$ at any point in the course of Algorithm~\algosigmoller.
  Then $\sig(\bfalpha_{1}) \preceq \sig(\bfalpha_{2}) \preceq \dots \preceq \sig(\bfalpha_{t})$.
\end{proposition}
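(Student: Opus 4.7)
The plan is to proceed by induction on the order in which elements are appended to $\mathcal{G}$, distinguishing the two ways this can happen: either $\bfalpha_{s+1}$ enters $\mathcal{G}$ as some $\bfe'_{i}$ in the outer \textbf{for}-loop, or as the regular-reduced form of an S-polynomial in the inner \textbf{while}-loop. The two preliminary observations I would use are that regular $\sig$-reduction preserves signatures exactly, and that the signature $S(J;c) = c \cdot S_{J}$ of a weak S-polynomial is equivalent to $S_{J}$ under $\prec$, since the signature order ignores coefficients in $\RR$.

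For an outer-loop addition $\bfalpha_{s+1} = \bfe'_{i}$, the signature is $\bfe_{i}$. By induction, all previously added $\bfalpha_{j}$ carry signatures of the form $k x^{a}\bfe_{\ell}$ with $\ell \leq i-1$: the earlier outer-loop rounds introduced signatures $\bfe_{1},\dots,\bfe_{i-1}$, and the inner-loop rounds only combine existing signatures linearly with monomial multipliers, which preserves the module index. Under POT, every such signature is $\prec \bfe_{i}$, so $\sig(\bfalpha_{s}) \prec \sig(\bfalpha_{s+1})$.

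For an inner-loop addition, $\bfalpha_{s+1}$ is the regular reduction of an S-polynomial on a set $J$ chosen with minimal presignature $S_{J}$, hence $\sig(\bfalpha_{s+1}) \simeq S_{J}$. The comparison therefore reduces to the following invariant:
\begin{center}
  whenever the inner \textbf{while}-loop is entered with $\mathcal{G} = (\bfalpha_{1},\dots,\bfalpha_{s})$, every $J' \in \mathcal{P}$ satisfies $S_{J'} \succeq \sig(\bfalpha_{s})$.
\end{center}
Granting the invariant, the minimal-presignature choice of $J$ gives $\sig(\bfalpha_{s+1}) \simeq S_{J} \succeq \sig(\bfalpha_{s})$.

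I would prove the invariant itself by induction on the algorithm's trace. Each $J' \in \mathcal{P}$ is inserted at the moment some $\bfalpha_{s_{0}}$ (with $s_{0} \leq s$) is added to $\mathcal{G}$, and by construction $s_{0} \in J'$. Since $J'$ is saturated, $M(s_{0}) \divides M(J')$, so $\frac{M(J')}{M(s_{0})}$ is a monomial in $\AA$ and $S_{J'} \succeq \frac{M(J')}{M(s_{0})}\sig(\bfalpha_{s_{0}}) \succeq \sig(\bfalpha_{s_{0}})$. Combined with the outer induction hypothesis $\sig(\bfalpha_{s_{0}}) \preceq \sig(\bfalpha_{s})$, this yields the invariant. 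The main obstacle is the bookkeeping around the interaction of the two loops and the refreshing of $\mathcal{P}$ between outer-loop iterations; however, emptying $\mathcal{P}$ trivially preserves the invariant, and repopulation always attaches sets to the newest index, which is precisely the mechanism handled above.
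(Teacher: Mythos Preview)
Your overall strategy is sound and close to the paper's: the paper argues by contradiction, taking the minimal $i$ with $\sig(\bfalpha_i)\succ\sig(\bfalpha_{i+1})$ and splitting into the two cases $i\in J_{i+1}$ and $i\notin J_{i+1}$, which correspond exactly to your two mechanisms (newly-inserted sets versus sets already waiting in $\mathcal{P}$). Your treatment of the outer-loop additions $\bfe'_i$ is correct and more explicit than the paper's.

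However, there is a genuine logical slip in your verification of the invariant. You obtain $S_{J'}\succeq\sig(\bfalpha_{s_0})$ for the index $s_0$ at which $J'$ was inserted, and then you combine this with the induction hypothesis $\sig(\bfalpha_{s_0})\preceq\sig(\bfalpha_{s})$ to conclude $S_{J'}\succeq\sig(\bfalpha_{s})$. That inference is invalid: from $a\succeq b$ and $b\preceq c$ nothing follows about $a$ versus $c$. Concretely, your argument only gives a lower bound on $S_{J'}$ in terms of the \emph{old} signature $\sig(\bfalpha_{s_0})$, and the induction hypothesis moves that bound in the wrong direction.

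The fix is to maintain the invariant step by step and, for sets $J'$ that were already in $\mathcal{P}$ when the current $J$ was picked, use the \emph{minimal-presignature selection} directly: since $J$ was chosen with $S_J$ minimal in $\mathcal{P}$, every remaining $J'$ satisfies $S_{J'}\succeq S_J\simeq\sig(\bfalpha_{s+1})$. For the newly inserted sets containing $s+1$, your argument $S_{J'}\succeq\frac{M(J')}{M(s+1)}\sig(\bfalpha_{s+1})\succeq\sig(\bfalpha_{s+1})$ is correct. These two pieces together re-establish the invariant at $s+1$; this is precisely the paper's case split $i\notin J_{i+1}$ versus $i\in J_{i+1}$, phrased inductively.
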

\begin{proof}
  Assume that this is not the case, and let $i$ be the smallest index such that $\sig(\bfalpha_{i}) \succ \sig(\bfalpha_{i+1})$.
  Let $J_{i}$ (\resp $J_{i+1}$) be the saturated set used to compute $\bfalpha_{i}$ (\resp $\bfalpha_{i+1}$).
  Note that $\sig(\bfalpha_{i}) \simeq \Sig(J_{i})$ and $\sig(\bfalpha_{i+1}) \simeq \Sig(J_{i+1})$.
  
  If $i \notin J_{i+1}$, then $J_{i+1}$ was already in the queue $\mathcal{P}$ when $J_{i}$ was selected, and so, by the selection criterion in the algorithm, $\Sig(J_{i}) \preceq \Sig(J_{i+1})$.

  If $i \in J_{i+1}$, then $\Sig(J_{i+1}) \succeq \frac{x^{J_{i+1}}}{\LM(\bbfalpha_{i})}\sig(\bfalpha_{i}) \succeq \sig(\bfalpha_{i})$.
\end{proof}

%If $l \sig(a\alpha) = k  \cdot \sig(b\beta)$, where $ l,k \in \RR$ then we call the signature singular otherwise it is regular.

The following useful lemma gives consequences of the fact that two regular $\sig$-reduced elements share the same signature.

\begin{lemma}\label{initialtermslemma}
  Let $\mathcal{G}= (\bfalpha_{1},\dots, \bfalpha_{s})$ be a signature Gröbner basis up to signature $\bfL$.
  %obtained with Algorithm~\algosigmoller.
  % Let $\bfL \in \AA^{m}$ be a term such that all $\bfv \in \AA^{m}$ with $\sig(\bfv) \prec \bfL$ $\sig$-reduce to zero.
  Let $\bfp, \bfq \in \AA^{m}$ such that $\sig(\bfp) = \sig(\bfq) = \bfL$, and $\bfp$ and $\bfq$ are regular $\sig$-reduced.
  Then $\LM(\bbfp) = \LM(\bbfq)$ and either $\LT(\bbfp) = \LT(\bbfq)$, or $\LC(\bbfp - \bbfq)$ lies in the ideal
  \begin{equation}
    \label{eq:9}
    C := \Big\langle \LC(\modpol{\bfalpha_{j}}) : \LM(\modpol{\bfalpha_{j}}) \divides m
     \text{ and }
     \frac{m}{\LM(\modpol{\bfalpha_{j}})}\sig(\bfalpha_{j}) \not\simeq \sig(\bfp) \Big\rangle.
  \end{equation}
  % In particular, if $\bfp$ and $\bfq$ are regular totally $\sig$-reduced, then $\LT(\bbfp) = \LT(\bbfq)$.
\end{lemma}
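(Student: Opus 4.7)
The plan is to exploit the fact that $\bfp - \bfq$ has signature strictly smaller than $\bfL$, hence $\sig$-reduces to zero modulo $\mathcal{G}$, and to extract structural information from the first weak top $\sig$-reduction step.

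First I would observe that since $\sig(\bfp) = \sig(\bfq) = \bfL$ in the strict sense (equal coefficients, by the ``$=$'' convention), the leading module terms of $\bfp$ and $\bfq$ cancel in the difference, so $\sig(\bfp - \bfq) \prec \bfL$. By the \sGB hypothesis, $\bfp - \bfq$ $\sig$-reduces to $0$ modulo $\mathcal{G}$. If $\bbfp = \bbfq$ the conclusion is immediate, so assume $\bbfp - \bbfq \neq 0$. Then the first weak top $\sig$-reduction step provides indices $i_j$, monomials $x^{a_j}$, and coefficients $l_j \in \RR$ with
\begin{displaymath}
  \LT(\bbfp - \bbfq) = \sum_j l_j\, x^{a_j}\LT(\bbfalpha_{i_j}),
\end{displaymath}
where $x^{a_j}\LM(\bbfalpha_{i_j}) = \LM(\bbfp - \bbfq)$ and $x^{a_j}\sig(\bfalpha_{i_j}) \preceq \sig(\bfp - \bfq) \prec \bfL$ for every $j$.

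Next I would prove $\LM(\bbfp) = \LM(\bbfq)$ by contradiction. If, say, $\LM(\bbfp) \succ \LM(\bbfq)$, then $\LM(\bbfp - \bbfq) = \LM(\bbfp)$ and $\LT(\bbfp - \bbfq) = \LT(\bbfp)$, and the decomposition above exhibits $\LT(\bbfp)$ as a weak combination of the $x^{a_j}\LT(\bbfalpha_{i_j})$, with every reducer satisfying $\frac{\LM(\bbfp)}{\LM(\bbfalpha_{i_j})}\sig(\bfalpha_{i_j}) \prec \bfL = \sig(\bfp)$. This is precisely the data of a regular $\sig$-reduction of $\bfp$, contradicting the hypothesis. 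Hence $\LM(\bbfp) = \LM(\bbfq) =: m$.

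Finally, if $\LT(\bbfp) \neq \LT(\bbfq)$, then $\LC(\bbfp) \neq \LC(\bbfq)$, so $\LM(\bbfp - \bbfq) = m$ and $\LC(\bbfp - \bbfq) = \LC(\bbfp) - \LC(\bbfq)$. Reading off the leading coefficient in the decomposition gives
\begin{displaymath}
  \LC(\bbfp - \bbfq) = \sum_j l_j\, \LC(\bbfalpha_{i_j}),
\end{displaymath}
where each index $i_j$ satisfies $\LM(\bbfalpha_{i_j}) \divides m$ and $\frac{m}{\LM(\bbfalpha_{i_j})}\sig(\bfalpha_{i_j}) \prec \sig(\bfp)$, and therefore a fortiori $\not\simeq \sig(\bfp)$. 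Each $\LC(\bbfalpha_{i_j})$ is thus one of the generators of $C$, and so $\LC(\bbfp - \bbfq) \in C$. The main subtlety is the passage from ``reducer of $\bfp - \bfq$'' to ``regular reducer of $\bfp$'' in the contradiction argument, which relies on the strict inequality $\sig(\bfp - \bfq) \prec \sig(\bfp)$ and on the fact that the POT order on signatures disregards coefficients.
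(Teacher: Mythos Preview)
Your proposal is correct and follows essentially the same route as the paper: form $\bfr = \bfp - \bfq$, use $\sig(\bfr) \prec \bfL$ to get $\sig$-reducibility to zero, derive a contradiction with regular $\sig$-reducedness of $\bfp$ if the leading monomials differ, and read off the coefficient relation otherwise. You have simply made explicit the steps the paper leaves terse, in particular the passage from a $\sig$-reduction of $\bfr$ to a \emph{regular} $\sig$-reduction of $\bfp$ via the strict inequality $\sig(\bfr) \prec \sig(\bfp)$.
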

\begin{proof}
  Let $\bfr = \bfp - \bfq$.
  Since $\sig(\bfp) = \sig(\bfq)$, we have $\sig(\bfr) \prec \sig(\bfp) = \bfL$, and so $\bfr$ $\sig$-reduces to $0$ modulo $\mathcal{G}$.
  Assume first that $\LM(\modpol{\bfp}) \neq \LM(\modpol{\bfq})$,
  then w.l.o.g. we may assume that $\LM(\modpol{\bfp}) \succ \LM(\modpol{\bfq})$,
  so $\LM(\modpol{\bfr}) = \LM(\modpol{\bfp})$.
  Since $\bfr$ is regular $\sig$-reducible, $\bfp$ is $\sig$-reducible.
  This is a contradiction with the assumption that $\bfp$ is $\sig$-reduced.

  So $\LM(\modpol{\bfp}) = \LM(\modpol{\bfq}) =: m$.
  If $\LT(\bbfp) \neq \LT(\bbfq)$, $C$ is the ideal of leading coefficients of polynomials which can eliminate $m$, 
  and since $\bfr$ is $\sig$-reducible, $\LC(\bbfp) - \LC(\bbfq) \in C$.
  %
  % For the last statement, since $\bfp$ and $\bfq$ are totally top $\sig$-reduced, by definition, $\LC(\bbfp)$ and $\LC(\bbfq)$ are coset-representatives modulo $C$, and so $\LC(\bbfp) = \LC(\bbfq)$.  
\end{proof}

\begin{algorithm}
  \caption{Test of 1-singular $\sig$-reducibility modulo a partial $\sig$-GB (\algosing)}
  \label{algo:1sing}
  \begin{algorithmic}\label{Algorithmsingular}
    \STATE \textbf{Input} $\mathcal{G}=\{\bfalpha_{1},\dots,\bfalpha_{s}\} \subset \AA^{m}$ and $\bfp \in \AA^{m}$ such that $\bfp$ is regular $\sig$-reduced \wrt{} $\mathcal{G}$ and $\mathcal{G}$ is a signature Gröbner basis up to $\sig(\bfp)$
    \STATE \textbf{Output} \true iff $\bfp$ is 1-singular $\sig$-reducible modulo $\mathcal{G}$\\[5pt]
    \STATE $J \leftarrow \left\{j \in \{1,\dots,s\} : \LM(\bbfalpha_{j}) \divides \LM(\bbfp)
     \text{ and } \frac{\LM(\bbfp)}{\LM(\bbfalpha_{j})}\sig(\bfalpha_{j}) \preceq \sig(\bfp)\right\}$
    \RETURN $\exists j \in J, \exists k_{j} \in \RR, k_{j}\frac{\LM(\bbfp)}{\LM(\bbfalpha_{j})}\sig(\bfalpha_{j}) = \sig(\bfp)$
  \end{algorithmic}
\end{algorithm}

We now prove the correctness of Algorithm~\algosigmoller.
The proof follows the structure of the proof in the case of fields~\cite{practicalgrobner:2012:stillman}, and adapts it to Möller's algorithm over PIDs.
In particular, it takes into account weak $\sig$-reductions instead of classical $\sig$-reductions.
The algorithm ensures that all regular S-polynomials up to a given signature $\mathbf{T}$ $\sig$-reduce to 0, and proving the correctness of the algorithm requires proving that this implies that all module elements with signature $ \prec \mathbf{T}$ $\sig$-reduce to $0$.

The key lemma of the proof is the following.

\begin{lemma}\label{supportinglemma}
  % Assume that the coefficient ring $\RR$ is a PID.
  Let $\mathcal{G}= (\bfalpha_{1},\dots, \bfalpha_{s}) \subseteq \AA^{m}$.
  Let $\bfu \in \AA^{m} \setminus \{0\}$ be $\sig$-reduced such that $\bbfu \neq 0$.
  Assume that $\mathcal{G}$ is a \sssGB{} basis up to signature $\sig(\bfu)$.
  Then there exists an S-polynomial $\bfp$ w.r.t. $\mathcal{G}$, such that:
  \begin{enumerate}
    \item the signature of $\bfp$ divides the signature of $\bfu$: $kx^a\sig(\bfp) = \sig(\bfu)$ with $k \in \RR$ and $x^{a} \in \Mon(\AA)$;
    \item if $\bfp'$ is the result of regular $\sig$-reducing $\bfp$ w.r.t. $\mathcal{G}$, then $kx^{a}\bfp'$ is regular $\sig$-reduced.
  \end{enumerate}
  % whose signature divides the signature of $\bfu$.
  % Also, there exists such a $\bfp$ such that
  % $kx^c\bfp'$ is regular reduced where $\bfp'$ is the result of regular reducing $\bfp$, $x^c$ is the monomial and $k$ is an element in $\RR$ such that $kx^c\sig(\bfp) = \sig(\bfu)$.  
\end{lemma}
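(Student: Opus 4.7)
The plan is to construct the desired S-polynomial $\bfp$ directly from data attached to $\bfu$ and then verify the two properties. The intuition is that the obstruction to further regular $\sig$-reduction of $\bfu$ (given that $\bbfu \neq 0$) must manifest as a combination of elements of $\mathcal{G}$ whose signatures conspire to cancel into something $\simeq \sig(\bfu)$, and this is precisely what an S-polynomial captures.

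For the construction, I would write $\sig(\bfu) = \kappa x^\alpha \bfe_\iota$ and identify the set $B$ of indices $j \in \{1,\dots,s\}$ for which $\sig(\bfalpha_j) \simeq x^{\alpha_j}\bfe_\iota$ with $x^{\alpha_j}$ dividing $x^\alpha$ and $x^{\alpha - \alpha_j}\LM(\bbfalpha_j)$ dividing $\LM(\bbfu)$. From $B$, I would extract a regular saturated set $J$ by choosing the signature index $j^\star$ so that $\frac{M(J)}{M(j^\star)}\sig(\bfalpha_{j^\star})$ is the maximal presignature contribution, then saturating with indices of strictly smaller contributed signature (possibly from outside $B$, contributing only on the polynomial side). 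Since $\RR$ is a PID, the generator $c$ of $\langle C(j) : j \in J \setminus \{j^\star\}\rangle : \langle C(j^\star)\rangle$ is unique up to units; set $\bfp = \SPol(J; c)$. Property (1), namely $kx^a\sig(\bfp) = \sig(\bfu)$ for suitable $k \in \RR$ and $x^a \in \Mon(\AA)$, then follows from the construction because $\sig(\bfp) = cS_J$ has module part dividing $x^\alpha \bfe_\iota$ and coefficient $cC(j^\star)$ dividing $\kappa$ in $\RR$.

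For property (2), I would let $\bfp' = \algosigreduce(\bfp, \mathcal{G})$, so that $\sig(\bfp') = \sig(\bfp)$ and $\sig(kx^a\bfp') = \sig(\bfu)$, and argue by contradiction. Suppose that $kx^a\bfp'$ is regular $\sig$-reducible by some $\bfalpha_\ell \in \mathcal{G}$. Since $\bfu$ and $kx^a\bfp'$ share signature $\sig(\bfu)$ and are both regular $\sig$-reduced with respect to the portion of $\mathcal{G}$ of strictly smaller contributed signature, Lem.~\ref{initialtermslemma} applies and yields $\LM(\bbfu) = \LM(\overline{kx^a\bfp'})$ together with control on the difference of their leading coefficients via the ideal generated by leading coefficients of lower-signature reducers. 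Combining this structural information with the hypothetical reducibility by $\bfalpha_\ell$ produces a regular $\sig$-reduction of $\bfu$ itself, contradicting the assumption that $\bfu$ is $\sig$-reduced.

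The main obstacle is precisely property (2): ruling out that scaling by $kx^a$, and in particular the coefficient scaling $k$ (which has no analogue in the field setting), introduces new regular reducibility for $\bfp'$. The argument hinges on Lem.~\ref{initialtermslemma} to transfer any hypothetical reducer of $kx^a\bfp'$ back into a reducer of $\bfu$, and on the \sssGB{} hypothesis to control what happens at lower signatures. A secondary subtlety is the construction of $J$ as a genuinely regular saturated set when several indices achieve the same presignature; the signature index must be uniquely determined, which requires absorbing ties into lower-signature indices using the PID structure of the coefficient ring.
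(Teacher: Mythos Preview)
Your argument for property~(2) has a genuine gap. You want to apply Lem.~\ref{initialtermslemma} to the pair $\bfu$, $kx^{a}\bfp'$, but that lemma requires \emph{both} elements to be regular $\sig$-reduced, and your contradiction hypothesis is precisely that $kx^{a}\bfp'$ is regular $\sig$-reducible. Your phrase ``regular $\sig$-reduced with respect to the portion of $\mathcal{G}$ of strictly smaller contributed signature'' does not weaken the hypothesis: regular $\sig$-reductions only ever involve reducers of strictly smaller contributed signature, so this \emph{is} regular $\sig$-reducedness, which $kx^{a}\bfp'$ fails by assumption. Tracing the proof of Lem.~\ref{initialtermslemma} directly does not save you either: the case $\LM(\bbfu) \prec \LM(\overline{kx^{a}\bfp'})$ cannot be excluded, and in that case a regular reducer of $kx^{a}\bfp'$ tells you nothing about $\bfu$. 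So there is no contradiction, and nothing guarantees that the \emph{first} S-polynomial you build satisfies~(2). Your Step~1 is also underspecified: you assert that $cC(j^{\star})$ divides $\kappa$, but nothing in your set $B$ produces a coefficient relation; one needs an explicit linear identity placing $\kappa$ (or $l$) in the relevant colon ideal.

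The paper does not claim the first S-polynomial works. It runs a descent: assuming $kx^{a}\bfp'$ is regular $\sig$-reducible, it combines the $1$-singular $\sig$-reduction of $\bfp'$ (available because $\sig(\bfp') \prec \sig(\bfu)$ and $\mathcal{G}$ is a \sssGB{} up to $\sig(\bfu)$) with the assumed regular $\sig$-reduction of $kx^{a}\bfp'$ to build a \emph{new} S-polynomial $\bfq$ with $\sig(\bfq)\mid\sig(\bfu)$ and strictly smaller contributed leading monomial, then iterates; the well-ordering on monomials forces termination. A separate sub-descent on the coefficient, using that $\RR$ is a PID (hence a UFD), handles the edge case $x^{a}=1$. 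Your proposal mentions the \sssGB{} hypothesis only vaguely and never actually invokes the $1$-singular reducibility of $\bfp'$, which is the mechanism that makes the whole argument go.
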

\begin{proof}
  The proof is in two steps: first, we construct a S-polynomial $\bfp$ whose signature divides $\sig(\bfu)$, and then, starting from $\bfp$, we show that there exists an S-polynomial satisfying the conditions of the lemma.
  %If $\alpha, \beta \in \AA^{m}$ are both reducible and $\sig(\alpha) \leq \sig(\alpha + \beta) $ and $\sig(\beta) \leq \sig(\alpha + \beta) $ then $\alpha + \beta$ is 
  %reducible or $\LT(\overline(\alpha) + \LT(\overline(\beta)) = 0$.

  In the remainder of the proof, for $i \in \{1,\dots,s\}$, let $M(i) = \LM(\bbfalpha_{i})$, $C(i) = \LC(\bbfalpha_{i})$, $T(i) = \LT(\bbfalpha_{i})$ and $S(i) = \sig(\bfalpha_{i})$.

  \vspace{0.3cm}
  \paragraph{Existence of  a S-polynomial satisfying 1.}
  For the first step, let $\sig(\bfu)$ be $lx^b\bfe_i$ for some $l \in \RR$, $x^b$ a monomial 
  and $\bfe_i$ a basis vector.
  Let $\bfe'_{i}$ be the result of regular $\sig$-reducing $\bfe_{i}$.
  If $\bbfe'_{i}=0$, then $\bfu$ regular $\sig$-reduces to $0$, which is a contradiction since we assumed $\bfu$ to be $\sig$-reduced and $\bbfu \neq 0$.
  Let $\bfL = lx^{b}\bfe'_{i}$, it has signature $lx^{b}\bfe_{i}$.
  Then $\bfu - \bfL$ has a smaller signature than $\bfu$, so it $\sig$-reduces to zero and in particular it is $\sig$-reducible. 
  Also, $\bfL$ is $\sig$-reducible by $\bfe'_{i}$.
  Consider the sum $(\bfu-\bfL) + \bfL = \bfu$.
  It is not $\sig$-reducible, which implies that $\LT(\overline{\bfu-\bfL}) = -\LT(\overline{\bfL})$. 

  Let $J_{\LM(\bbfL)}$ be the maximal regular saturated set $J$ with $M(J) \divides \LM(\bbfL)$.
  Since $\bfu - \bfL$ $\sig$-reduces to zero, there exists $(m_{j})_{j \in J_{\LM(\bbfL)}}$ monomials in $\AA$, and $(k_{j})_{j \in J_{\LM(\bbfL)}}$ coefficients in $R$ such that
  \begin{equation}
    \label{eq:21}
    % \(
    \LT(\bbfu - \bbfL) = \sum_{j \in J_{\LM(\bbfL)}} k_{j}m_{j} T(j)%\)
  \end{equation}
  with $m_{j}M(j) = \LM(\bbfu-\bbfL)$
  and $\sig(k_{j}m_{j}\bfalpha_{j}) = k_{j}m_{j} S(j) \preceq \sig(\bfu - \bfL) \prec \sig(\bfu)$ for all $i$ such that $k_{j} \neq 0$.
  Let $\sigma$ be the index of $\bfe'_i$ in $\mathcal{G}$, that is $\bfalpha_{\sigma} = \bfe'_{i}$.
  Consider the set 
  \(% \begin{displaymath}
  J' = \{j : m_{j} \neq 0\} \cup \{\sigma\} \subseteq J_{\LM(\bbfL)},
  \) % \end{displaymath}
  it is regular by construction.
  
  Let $J$ be a regular saturated set containing $J'$.
  Then, since for all $j \in J'$, $M(j) \divides \LM(\bbfL) = x^{b}M(\sigma)$, 
  % \begin{equation}
  %   \label{eq:22}
  \( M(J) =  \lcm\left\{ M(j) : j \in J'\right\} \divides x^{b}M(\sigma). \)
  % \end{equation}
  Furthermore, looking at the leading coefficients in Eq.~\eqref{eq:21}, we have
  \begin{equation}
    \label{eq:23}
    l \, C(\sigma) = - \sum_{j \in J'} k_{j}C(j)
  \end{equation}
  and so
  $l \in \langle C(j)  : j \in J, j \neq \sigma\rangle : \langle C(\sigma) \rangle$.
  Since $\RR$ is a PID, this ideal is principal.
  Let $b_{J}$ be its generator, then $b_{J} \divides l$.
  Let $\bfp$ be the S-polynomial corresponding to $J$ and $b_{J}$.
  It is regular by construction since $J$ is a regular saturated set, and its signature is
  $\sig(\bfp) = b_{J} \frac{M(J)}{M(\sigma)} S(\sigma)
  = b_{J} \frac{M(J)}{\LM(\bbfe'_i)} \bfe_{i}$.
  Since $b_{J}$ divides $l$ and $M(J)$ divides $x^{b}M(\sigma)$,
  $\sig(\bfp)$ divides $lx^{b}s \bfe'_{i} = \sig(\bfL) = \sig(\bfu)$.

  \vspace{0.3cm}
  \paragraph{Existence of a S-polynomial satisfying 1. and 2.}
  Let $\bfp$ be an S-polynomial whose signature divides $\sig(\bfu)$,
  and let $\bfp'$ be the regular $\sig$-reduced form of $\bfp$.
  Write $\sig(\bfu) = \sig(kx^a\bfp)$, where $k \in \RR$ and $x^a$ is a monomial.

  We can assume that $kx^a\bfp'$ is regular $\sig$-reducible or else we are done.
  We then construct an S-polynomial $\bfq$ such that $\sig(lx^b\bfq) = \sig(\bfu)$
  and $\LM(\overline{kx^a\bfp}) \succ \LM(\overline{lx^b\bfq})$.
  If $lx^b\bfq'$, where $\bfq'$ is obtained by regular $\sig$-reducing $\bfq$, is not regular $\sig$-reducible then we are done.
  Otherwise we can do the same process again and get a third S-polynomial with the same properties and keep repeating.
  Since the initial terms are strictly decreasing and we have a well order there are only finitely many such S-polynomials.

  First, we show that we can assume that $x^{a} \succ 1$.
  Indeed, assume that $a=0$ and $k\bfp'$ is regular $\sig$-reducible.
  Since $\RR$ is an integral domain, $\LM(k\bbfp') = \LM(\bbfp')$.
  Let $J_{\LM(\bbfp')}$ be the maximal regular saturated set $J$ with $M(J) \divides \LM(\bbfp')$.
  Then $k\LC(\bbfp')$ lies in the ideal $\langle \LC(\alpha_{j}) : j \in J_{\LM(\bbfp')} \rangle$.
  Since $\RR$ is a PID, this ideal is principal, let $b_{J_{\LM(\bbfp')}}$ be its generator,
  then $b_{J_{\LM(\bbfp')}} \divides k$.
  Let $\bbfq$ be the S-polynomial corresponding to the regular saturated set $J_{\LM(\bbfp')}$
  and the generator $b_{J_{\LM(\bbfp')}}$, its signature divides $\sig(\bfu)$ and is strictly divisible by $\sig(\bfp)$.
  Repeating the process as needed, we obtain a strictly increasing sequence of elements dividing the coefficient of $\sig(\bfu)$, and since $\RR$ is a PID and in particular a unique-factorization domain, this sequence has to be finite.
  So we can assume that $x^{a} \succ 1$.

  % \vspace{0.3cm}
  % \textbf{Constructing a singular $\sig$-reducer.}
  We will construct two reductions of $\LT(kx^{a}\bbfp')$, which taken together will give the S-polynomial $\bfq$.
  For the first reduction, the module element $\bfp' \in \AA^{m}$ is regular $\sig$-reduced modulo the \sssGB{} $\mathcal{G}$, and its signature is smaller than $\sig(\bfu)$.
  Furthermore, by assumption $kx^{a}\bfp'$ is not regular $\sig$-reduced, so $\bbfp'$ cannot be $0$.
  So, by definition of a \sssGB{}, $\bfp'$ is 1-singular $\sig$-reducible.
  So there exists
  $(t_{i}^{(1)})_{i \in J_{1}}$ terms in $\AA$, with $J_{1} \subset \{1,\dots,s\}$ and for all $i \in J_{1}$,
  $t_{i}^{(1)} \neq 0$,
  and such that
  \begin{equation}
    \label{eq:equation1}
    \LT(\bbfp') = \sum_{i \in J_{1}} t^{(1)}_{i} \LT(\bbfalpha_{i}) = \sum_{i \in J_{1}} t^{(1)}_{i} T(i)
  \end{equation}
  with for all $i \in J_{1}$, $\LM(t_{i}^{(1)}\bbfalpha_{i}) = \LM(t_{i}^{(1)}) M(i) = \LM(\bbfp')$.
  Furthermore, there exists $\tau$ in $J_{1}$, $t_{\tau}^{(1)} S(\tau) = \sig(\bbfp)$
  and for all $i \in J_{1} \setminus \{\tau\}$, $t_{i}^{(1)} S(i) \prec \sig(\bfp)$.

  We now build the second reduction.
  Since $kx^a\bfp'$ is regular $\sig$-reducible, there exists $(t_{i}^{(2)})_{i \in J_2}$ terms in $\AA$, with
  $J_{2} \subset \{1,\dots,s\}$ and for all $i \in J_{2}$, $t_{i}^{(2)} \neq 0$,
  such that
  \begin{equation}
    \label{eq:equation2}
    \LT(kx^{a}\bbfp') = \sum_{i \in J_{2}} t_{i}^{(2)} \LT(\bbfalpha_{i}) = \sum_{i \in J_{2}} t_{i}^{(2)} T(i),
  \end{equation}
  and for all $j \in J_{2}$, $\LM(t_{j}^{(2)}) M(j) = \LM(kx^{a}\bbfp')$
  and $t_{j}^{(2)}S(j) \prec \sig(kx^a\bfp')$.

  % \vspace{0.3cm}
  % \textbf{Constructing an S-polynomial as required.}
  Now let $J = J_{1} \cup J_{2}$, and let $t_{i}^{(1)}=0$ if $i \in J_{2} \setminus J_{1}$, $t_{j}^{(2)} = 0$
  if $j \in J_{1} \setminus J_{2}$.
  Note that $\tau \notin J_{2}$, so $t_{\tau}^{(2)} = 0$.
  Combining Eqs.~\eqref{eq:equation1} and \eqref{eq:equation2}, we obtain a decomposition of $kx^{a}t_{\tau}T(\tau)$ as
  \begin{equation}
    \label{eq:7}
    kx^{a}t_{\tau}T(\tau) = - \sum_{i \in J \setminus \{\tau\}} t_{i}T(i).
  \end{equation}
  where for all $i \in J$, $t_{i} = kx^{a}t_{i}^{(1)} - t_{i}^{(2)}$.
  Furthermore, for all $i \in J \setminus \{\tau\}$,
  $\LM(t_{i})M(i) = \LM(x^{a}\bbfp') = \LM(x^{a} t_{\tau}) M(\tau)$
  and $t_{i}S(i) \prec \sig(\bbfp) = k x^{a} t_{\tau} S(\tau)$.

  The same argument as the one used, in the first part of the proof, to construct an S-polynomial based on Eq.~\eqref{eq:21} yields an S-polynomial $\bfq$ such that $\sig(\bfq)$ divides $\sig(\bfu)$, say $lx^{b}\sig(\bfq) = \sig(\bfu)$.
  Furthermore, since the leading term is eliminated in the construction of an S-polynomial, $\LT(lx^{b}\bbfq) \prec \LT(kx^{a}\bbfp')$, which concludes the proof.
\end{proof}

\begin{theorem}[Correctness of Algorithm~\algosigmoller]
  \label{regularreducers}
  Let $\bfT$ be a term of $\AA^m$
  and let $\mathcal{G}= (\bfalpha_{1},\dots, \bfalpha_{s}) \subseteq \AA^m$ be a finite basis as computed by Algo.~\ref{Algorithmsig}.
  % w.r.t. some monomial order $\prec$ in $\AA$ and  $POT$ ordering in $\AA^m$. 
  Assume that all regular S-polynomials $\bfp$ with $\sig(\bfp) \prec \bfT$ $\sig$-reduce to $0$ \wrt{} $\mathcal{G}$ .
  %Then all elements $\bfu \in \AA^m$ with $\sig(\bfu) \prec \bfT$ $\sig$-reduce to zero.
  Then $\mathcal{G}$ is a semi-strong signature-Gröbner basis up to signature $\bfT$.
\end{theorem}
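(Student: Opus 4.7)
The plan is to argue by well-founded induction on the module monomial of $\sig(\bfu)$ as $\bfu$ ranges over $\AA^{m}$ with $\sig(\bfu) \prec \bfT$. The inductive hypothesis is exactly that $\mathcal{G}$ is a \sssGB{} up to $\sig(\bfu)$, and the goal becomes to show that $\bfu$ itself satisfies one of the three conditions in the definition of a \sssGB. Fix such a $\bfu$ and suppose for contradiction that $\bbfu \neq 0$, $\bfu$ is not regular $\sig$-reducible, and $\bfu$ is not 1-singular $\sig$-reducible. To apply Lemma \ref{supportinglemma}, I first need $\bfu$ to be $\sig$-reduced. Any hypothetical reduction must then be singular-but-not-1-singular, and I would handle this by showing that such a step either strictly drops the module monomial of the signature (in which case the inductive hypothesis applied to the reduct lifts back to a regular $\sig$-reduction of $\bfu$, yielding a contradiction) or it preserves that monomial and only alters its coefficient; the latter case forms a chain that terminates by the PID/Noetherian structure of $\RR$.

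Applied to the now $\sig$-reduced $\bfu$, Lemma \ref{supportinglemma} supplies a regular S-polynomial $\bfp$ such that $kx^{a}\sig(\bfp) = \sig(\bfu)$ for some $k \in \RR$ and $x^{a} \in \Mon(\AA)$, and such that $kx^{a}\bfp'$ is regular $\sig$-reduced, where $\bfp'$ is the regular $\sig$-reduced form of $\bfp$. Since $\sig(\bfp) \preceq \sig(\bfu) \prec \bfT$, the standing hypothesis of the theorem forces $\bfp$, and hence $\bfp'$, to $\sig$-reduce to $0$. If $\bbfp' = 0$, the element $\bfq := \bfu - kx^{a}\bfp'$ satisfies $\bbfq = \bbfu \neq 0$ and $\sig(\bfq) \prec \sig(\bfu)$; the inductive hypothesis supplies reducers for $\bfq$ whose multiplied signatures are strictly smaller than $\sig(\bfu)$, and because $\bfu$ and $\bfq$ have the same polynomial value, these reducers give a regular $\sig$-reduction of $\bfu$, contradicting the initial assumption.

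In the alternative case $\bbfp' \neq 0$, the element $\bfp'$ is regular $\sig$-reduced, $\sig$-reduces to $0$, and, after a rescaling argument in the borderline situation $x^{a} = 1$ with $k$ a unit, has module-monomial signature strictly smaller than $\sig(\bfu)$; the inductive hypothesis therefore makes it 1-singular $\sig$-reducible. Multiplying that 1-singular reduction by $kx^{a}$ yields a 1-singular $\sig$-reduction of $kx^{a}\bfp'$ whose signature is $\sig(\bfu)$. Since $\bfu$ and $kx^{a}\bfp'$ are both regular $\sig$-reduced and share the signature $\sig(\bfu)$, Lemma \ref{initialtermslemma} yields $\LM(\bbfu) = \LM(\overline{kx^{a}\bfp'})$ together with a representation of the leading-coefficient discrepancy as a combination of leading coefficients of reducers with strictly-smaller multiplied signature; splicing this correction into the 1-singular reduction of $kx^{a}\bfp'$ produces a 1-singular $\sig$-reduction of $\bfu$, the desired contradiction. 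The principal obstacles are the first step, since singular-but-not-1-singular reductions that preserve the module monomial of the signature are a genuinely ring-theoretic phenomenon absent in the field case, and the borderline rescaling in Case B; both rely essentially on the PID hypothesis.
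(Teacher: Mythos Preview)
Your overall architecture—well-founded induction on the signature monomial, then Lemma~\ref{supportinglemma} followed by Lemma~\ref{initialtermslemma}—matches the paper's. But two aspects of your execution diverge from the paper, and the first is a genuine gap.

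\textbf{Step 3 (reducing to a fully $\sig$-reduced $\bfu$).} Your claim that a singular-but-not-1-singular reduction whose reduct has strictly smaller signature monomial ``lifts back to a regular $\sig$-reduction of $\bfu$'' is not correct. The first step of that chain uses reducers $\bfbeta_j$ with $x^{a_j}\sig(\bfbeta_j)\simeq\sig(\bfu)$; the inductive hypothesis tells you only that the reduct $\sig$-reduces to~$0$, and gives you no way to replace the contribution $\sum l_j x^{a_j}\LT(\bbfbeta_j)$ to $\LT(\bbfu)$ by terms with strictly smaller multiplied signature. All you can conclude is that $\bfu$ $\sig$-reduces to~$0$—not that it is regular or 1-singular $\sig$-reducible. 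The paper sidesteps this entirely by proving the (formally weaker) statement that every $\bfu$ with $\sig(\bfu)\prec\bfT$ $\sig$-reduces to~$0$: taking $\bfu$ of minimal such signature, one may $\sig$-reduce it as far as possible; any strict signature drop contradicts minimality, so the endpoint is fully $\sig$-reduced with the same signature monomial, and Lemma~\ref{supportinglemma} applies.

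\textbf{Case B.} You work to show that $\bfp'$ is 1-singular $\sig$-reducible via the inductive hypothesis, which forces the awkward borderline $x^{a}=1$ (where $\sig(\bfp')\simeq\sig(\bfu)$ and the induction does not apply, regardless of whether $k$ is a unit). The paper avoids this entirely: since $\bfp$ is a regular S-polynomial with $\sig(\bfp)\preceq\sig(\bfu)\prec\bfT$, the theorem's standing hypothesis directly says $\bfp$ $\sig$-reduces to~$0$, hence $kx^{a}\bfp'$ is $\sig$-reducible. One writes $\LT(kx^{a}\bbfp')=\sum\tau_i\LT(\bbfalpha_i)$ from that reduction and $\LT(\bbfu)-\LT(kx^{a}\bbfp')=\sum t_i\LT(\bbfalpha_i)$ from Lemma~\ref{initialtermslemma}, and adds the two to exhibit a $\sig$-reduction of $\bfu$—the contradiction. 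No case split on $\bbfp'=0$, no 1-singularity of $\bfp'$, and no borderline analysis are needed.
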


\begin{proof}
  To get a contradiction assume there exists a $\bfu \in \AA^{m}$ with $\sig(\bfu) \prec \bfT$ such that $\bfu$ does not $\sig$-reduce to zero. 
  Assume w.l.o.g. that $\sig(\bfu)$ is $\prec$-minimal such that $\bfu$ does not $\sig$-reduce to zero and also that $\bfu$ is regular $\sig$-reduced. 

  By Lem.~\ref{supportinglemma} there is an S-polynomial $\bfp$ with $\sig(kx^{a}\bfp) = \sig(\bfu)$ with $k \in \RR$, $x^{a} \in \Mon(\AA)$.
  Also, $kx^a\bfp'$ is regular $\sig$-reduced where $\bfp'$ is the result of regular $\sig$-reducing $\bfp$.

  Let $J_{\LM(\bbfu)}$ be the maximal regular saturated set $J$ with $M(J) \divides \LM(\bbfu)$.
  Since $\sig(kx^a\bfp) = \sig(\bfu)$ and both $kx^a\bfp'$ and $\bfu$ are regular $\sig$-reduced, we have by Lem.~\ref{initialtermslemma} that $\LM(kx^a\bbfp') = \LM(\overline{\bfu} )$, and either $\LT(kx^{a}\bbfp') = \LT(\bbfu)$, or
  \begin{equation}
    \label{eq:12}
    \LC(\bbfu% ) - \LC(
    -
    kx^{a}\bbfp') \in
    \left\langle \LC(\bbfalpha_{j}) : j \in J_{\LM(\bbfu)} \right\rangle.
    % I_{J_{\LM(\bbfu)}}.
  \end{equation}
  So in either case, there exists $(t_{i})_{i \in J_{\LM(\bbfu)}}$ terms in $\AA$, possibly all zero, such that
  \begin{equation}
    \label{eq:18}
    \LT(\bbfu) - \LT(kx^{a}\bbfp') = \sum_{i \in J_{\LM(\bbfu)}} t_{i} \LT(\bbfalpha_{i})
  \end{equation}
  and $t_{i}\LM(\bbfalpha_{i}) = \LM(\bbfr) = \LM(\bbfu)$ for all $i$ such that $t_{i} \neq 0$.
  
  Since $\bfp'$ is a regular S-polynomial with $\sig(\bfp') \preceq \sig(\bfu) \prec \bfT$, $\bfp'$ is $\sig$-reducible, and so $k x^{a} \bfp'$ is $\sig$-reducible.
  So there exists $(\tau_{i})_{i \in J_{\LM(\bbfu)}}$ terms in $\AA$ such that
  \begin{equation}
    \label{eq:19}
    \LT(kx^{a}\bbfp') = \sum_{i \in J_{\LM(\bbfu)}} \tau_{i} \LT(\bbfalpha_{i}),
  \end{equation}
  and $\tau_{i}\LM(\bbfalpha_{i}) = \LM(kx^{a}\bbfp') = \LM(\bbfu)$ for all $i$ such that $\tau_{i} \neq 0$.
  So
  \begin{equation}
    \label{eq:20}
    \LT(\bbfu) = \left( \LT(\bbfu) - \LT(kx^{a}\bbfp') \right)
    + \LT(kx^{a}\bbfp')
    = \sum_{i \in J_{\LM(\bbfu)}} (t_{i}+\tau_{i}) \LT(\bbfalpha_{i}),
  \end{equation}
  and $\bfu$ is $\sig$-reducible which is a contradiction. 
\end{proof}

\subsection{Proof of termination}

The usual proofs of termination of signature-based Gröbner basis algorithms (\eg{}~\cite[Th.~11]{practicalgrobner:2012:stillman}) rely on the fact that all elements which are singular $\sig$-reducible are discarded in the computations.
Algorithm~\algosigmoller only discards those which are 1-singular $\sig$-reducible.
For this reason, we adapt the proof of termination of Algorithm~\alname{RB} \cite[Th.~20]{eder:2013:signature}, which handles singular $\sig$-reducible elements in a different way. 

\begin{theorem}
  \label{lemmatermination}
  Algorithm \algosigmoller terminates.
\end{theorem}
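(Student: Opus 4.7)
The plan is to adapt the termination proof of Algorithm~\alname{RB} from~\cite[Th.~20]{eder:2013:signature}, which handles a setting where not all singular $\sig$-reducible elements are discarded. The outer \textbf{for} loop of Algorithm~\algosigmoller runs at most $m$ times, so it suffices to show that within each iteration only finitely many elements are added to $\mathcal{G}$. I would argue by contradiction, assuming that some iteration produces an infinite sequence $(\bfalpha_i)_{i \geq 1}$ of new basis elements. By Prop.~\ref{lemme:nondecreasing} their signatures are non-decreasing, and by pigeonhole on the $m$ module basis vectors I restrict to an infinite subsequence whose signatures all involve a fixed $\bfe_j$; for such $i$, write $\sig(\bfalpha_i) \simeq c_i x^{a_i}\bfe_j$ with $c_i \in \RR$, and $\LM(\bbfalpha_i) = x^{b_i}$.

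The goal is to produce indices $i < i'$ in the subsequence such that $\bfalpha_{i'}$ is either regular $\sig$-reducible or 1-singular $\sig$-reducible by $\bfalpha_i$. Either conclusion is a contradiction: the first because $\bfalpha_{i'}$ was the output of \algosigreduce and is therefore regular $\sig$-reduced, the second because $\bfalpha_{i'}$ was not filtered out by \algosing, whose correctness is guaranteed by Lem.~\ref{lemme:correctness-1sing}. Applying Dickson's lemma to the pairs $(x^{a_i}, x^{b_i}) \in \NN^{2n}$ provides $i < i'$ with $x^{a_i} \divides x^{a_{i'}}$ and $x^{b_i} \divides x^{b_{i'}}$, making $\bfalpha_i$ a syntactic candidate reducer of $\bfalpha_{i'}$ after scaling by $x^d := x^{b_{i'}}/x^{b_i}$. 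The induced signature $c_i x^d \sig(\bfalpha_i)$ is either strictly below $\sig(\bfalpha_{i'})$, which gives the regular case, or similar to it; in the latter case, 1-singularity requires both the matching-quotient condition $x^d x^{a_i} = x^{a_{i'}}$ and the coefficient divisibility $c_i \divides c_{i'}$ in $\RR$.

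The hard part is to secure these last two conditions simultaneously. The matching-quotient condition is equivalent to $a_i - b_i = a_{i'} - b_{i'}$ in $\ZZ^n$, which is not automatic from a bare Dickson argument on pairs; I would therefore refine the pigeonhole step by partitioning the subsequence according to the offset vector $a_i - b_i$, which lies in a controlled region because the signature of every element added to $\mathcal{G}$ arises from an S-polynomial whose leading monomial is bounded by $M(J)$ for the underlying regular saturated set $J$. The coefficient divisibility is more delicate, because the divisibility order on $\RR$ is not a well-quasi-order even for a PID such as $\ZZ$: an infinite sequence of coefficients may form an antichain. To handle this, I would exploit the explicit construction of the signature coefficients in the algorithm, each coming from \algosatideal{} applied to the leading coefficients of a regular saturated set; since $\RR$ is Noetherian, the ascending chain of ideals generated by finite prefixes of $(c_i)$ stabilizes, forcing every sufficiently late $c_{i'}$ into the ideal generated by earlier $c_i$'s. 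Combining these two ingredients with the structure of Algorithm~\algosigmoller and the characterization in Lem.~\ref{lemme:correctness-1sing} should produce the desired 1-singular reducer and close the contradiction; the main technical challenge is the bookkeeping required to ensure that a single index $i$ simultaneously supplies the monomial divisibility, the matching quotient, and the coefficient divisibility.
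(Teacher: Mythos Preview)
Your proposal has two genuine gaps that the bookkeeping you allude to cannot repair. First, the ``regular case'' does not yield a contradiction: in the weak setting, having a single $\bfalpha_i$ with $\LM(\bbfalpha_i)\mid\LM(\bbfalpha_{i'})$ and $x^d\sig(\bfalpha_i)\prec\sig(\bfalpha_{i'})$ does \emph{not} make $\bfalpha_{i'}$ regular $\sig$-reducible. Weak reducibility requires $\LC(\bbfalpha_{i'})$ to lie in the ideal generated by the leading coefficients of \emph{all} admissible reducers; one candidate with a non-dividing coefficient (e.g.\ $\LC(\bbfalpha_i)=3$, $\LC(\bbfalpha_{i'})=5$ over $\ZZ$) gives nothing. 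Second, your Noetherian argument for the coefficients does not deliver what Lemma~\ref{lemme:correctness-1sing} needs. Stabilisation of the chain $\langle c_1\rangle\subseteq\langle c_1,c_2\rangle\subseteq\cdots$ only gives $c_{i'}\in\langle c_1,\dots,c_N\rangle$, i.e.\ $\gcd(c_1,\dots,c_N)\mid c_{i'}$; it does \emph{not} produce a single index $i$ with $c_i\mid c_{i'}$, and $1$-singular reducibility demands exactly that. (Take $c_1=2$, $c_2=3$, $c_{i'}=5$.) A related problem: your claim that the offsets $a_i-b_i$ lie in a ``controlled region'' is unjustified, since regular $\sig$-reduction can shrink $\LM$ arbitrarily while fixing the signature.

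The paper avoids all of this by abandoning the element-by-element reducibility contradiction. It partitions $\mathcal{G}$ by sig-lead ratio $r(\bfalpha)=\sig(\bfalpha)/\LM(\bbfalpha)$ (your offset, up to coefficient), proves there are only finitely many ratio classes via a minimality argument using Noetherianity of $\AA$ and $\AA^m$, and then shows each class $\mathcal{G}_r$ is finite by induction on $r$: every $\bfalpha_t\in\mathcal{G}_r$ arises from an S-polynomial whose underlying regular saturated set $J$ consists entirely of elements with ratio strictly below $r$ (because forming the S-polynomial and then reducing strictly increases the ratio), so only finitely many such $J$ exist by the induction hypothesis. This counting argument never needs to exhibit a single dividing coefficient, which is precisely the obstruction your approach runs into.
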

\proofvspace
\proofvspace
\begin{proof}
  Let $\mathcal{G} = (\bfalpha_{1},\dots,\bfalpha_{t},\dots)$ be the sequence of basis elements computed by \algosigmoller.
  By construction, for all $t \geq 1$, $\overline{\bfalpha_t}$ is not $\sig$-reducible by $\mathcal{G}_{t-1} := \{\bfalpha_{1},\dots,\bfalpha_{t-1}\}$, and all $\bfv \in \AA^{m}$ with $\sig(\bfv) \prec \sig(\bfalpha_t)$ $\sig$-reduce to zero w.r.t. $\mathcal{G}_{t-1}$. % for each $i \geq m$

  For $i \geq 1$, let $M(i) = \LM(\bbfalpha_{i})$, $T(i) = \LT(\bbfalpha_{i})$.
  We define the sig-lead ratio $r(\bfalpha_{i})$ of $\bfalpha_{i}$ as $\frac{\sig(\bfalpha_{i})}{M(i)}$.
  Those ratios are ordered naturally by $\frac{s}{m} \prec \frac{s'}{m'} \iff sm' \prec s'm$.
  
  We partition $\mathcal{G}$ into subsets $\mathcal{G}_{r} = \{\bfalpha_{i} \mid r(\bfalpha_{i}) \simeq r\}$, where $\simeq$ denotes equality up to a coefficient in $\RR$.
  We prove that only finitely many $\mathcal{G}_{r}$ are non-empty, and that they are all finite, hence $\mathcal{G}$ is finite.
  
  First, we prove that only finitely many $\mathcal{G}_{r}$ are non-empty.
  We do so by counting minimal basis elements, where $\bfalpha_{i}$ is minimal if and only if there is no $\bfalpha_{j} \in \mathcal{G}$ with $\sig(\bfalpha_{j}) \divides \sig(\bfalpha_{i})$ and $T(j) \divides T(i)$.
  A non-minimal module element $\bfalpha_{i}$ is $\sig$-reducible by $\{\bfalpha_{1},\dots,\bfalpha_{i-1}\}$ (\cite[Lem.~12]{practicalgrobner:2012:stillman}), and since all basis elements are regular $\sig$-reduced by construction, $\bfalpha_{i}$ is singular $\sig$-reducible.
  In particular, there exists at least one $\bfalpha_{j}$, $j < i$
  and a monomial $m$ with $\sig(m\bfalpha_{j}) \simeq \sig(\bfalpha_{i})$ and $m M(j) = M(i)$, so $\bfalpha_{i}$ and $\bfalpha_{j}$ lie in the same subset $\mathcal{G}_{r}$.
  Hence there are at most as many non-empty $\mathcal{G}_{r}$'s as there are minimal basis elements.
  This is finitely many because $\AA$ and $\AA^{m}$ are Noetherian.

  Then we prove by induction on the finitely many non-empty sets $\mathcal{G}_{r}$ that each $\mathcal{G}_{r}$ is finite.
  Let $r$ be a sig-lead ratio, assume that for all $r' < r$, $\mathcal{G}_{r'}$ is finite.
  Let $\bfalpha_{t} \in \mathcal{G}_{r}$.
  If $\bfalpha_{t}$ is $\bfe_{i}$ for some $i$, then it only counts for one.
  Otherwise, let $J$ be the regular saturated set, and $\bfp$ the corresponding S-polynomial, that \algosigmoller regular $\sig$-reduced to obtain $\bfalpha_{t}$.
  Then $\bfp = \sum_{j \in J} b_{j}\frac{M(J)}{M(j)}\bfalpha_{j}$ for $b_{j} \in \RR$, and there exists $\tau \in J$ such that for all $j \in J \setminus \{\tau\}$,
  $\frac{M(J)}{M(j)}\sig(\bfalpha_{j}) \prec \frac{M(J)}{M(\tau)}\sig(\bfalpha_{\tau})$.
  Also $T(t) \prec \LT(\frac{M(J)}{M(\tau)}\bbfalpha_{\tau})$ and
  $\sig(\bfalpha_{t}) = \frac{M(J)}{M(\tau)}\sig(\bfalpha_{\tau})$.
  So \(
  % \begin{equation}
  %   \label{eq:3}
  r = \frac{\sig(\bfalpha_{t})}{M(t)}
  \succ \frac{\sig(\bfalpha_{\tau})}{M(\tau)}
  \succ \frac{\sig(\bfalpha_{j})}{M(j)}
  %\end{equation}
  \)
  for $j \in J \setminus \{\tau\}$.
  Hence all $\bfalpha_{j}$, $j \in J$ are in some $\mathcal{G}_{r_{j}}$ with $r_{j} < r$, so for computing elements of $\mathcal{G}_{r}$,
  the algorithm will consider at most as many saturated subsets
  as there are subsets of $\bigcup_{r' < r} \mathcal{G}_{r}$, which is finite by induction.
  Furthermore, since $\RR$ is a PID and in particular Noetherian, with each saturated subset $J$, the algorithm only builds finitely many S-polynomials (actually, it only builds one).
  So overall, we find that $\mathcal{G}_{r}$ is finite, which concludes the proof by induction.
\end{proof}

\subsection{Eliminating S-polynomials}
\label{sec:Elim-S-vect}

% During the execution of Algorithm~\algosigmoller, after producing a regular S-polynomial $\bfp$ and regular reducing it to $\bfp'$, either of the three following situations can appear:
% \begin{enumerate}
%   \item $\bbfp'=0$, that is $\bfp'$ is a syzygy;
%   \item $\bfp'$ is 1-singular reducible;
%   \item $\bfp'$ is added to the signature Gröbner basis.
% \end{enumerate}

It is well known in the case of fields that additional criteria can be implemented to detect that a regular S-pair will lead to an element which $\sig$-reduces to 0.
In this section, we show how we can implement three such criteria, namely the syzygy criterion, the F5 criterion and the singular criterion.

\subsubsection{Syzygy Criterion}

Syzygy criteria rely on the fact that, if the signature of an S-polynomial can be written as a linear combination of signatures of syzygies, then this S-polynomial would be a syzygy itself.
Signatures of syzygies can be identified in two ways:
\begin{itemize}
  \item the Koszul syzygy between basis elements $\bfp$ and $\bfq$ such that $\sig(\bfp)=m_{\bfp} \bfe_{i}$, $\sig(\bfq) = m_{\bfq} \bfe_{j}$, $i < j$ is $\bbfp \bfq - \bbfq \bfp$, and it has signature $\LT(\bbfp)\sig(\bfq)$;
  \item if a regular S-polynomial $\bfp$ $\sig$-reduces to $0$, then $\sig(\bfp)$ and its multiples are signatures of syzygies; thus, the algorithm may maintain a set of generators of signatures of syzygies by adding to this set $\sig(\bfp)$ for each S-polynomial $\bfp$ $\sig$-reducing to 0.
\end{itemize}
For regular sequences, all syzygies are Koszul syzygies.

\begin{proposition}[Syzygy criterion]
  \label{lemme:1}
  Assume that $\bfT$ is a signature such that all module elements with signature less than $\bfT$ $\sig$-reduce to $0$.
  Let $\bfp \in \AA^{m}$ be such that there exist syzygies $\bfz_{1},\dots,\bfz_{k}$
  and terms $m_{1},\dots,m_{k}$ in $\AA$ with $\sig(\bfp) = \sum_{i=1}^{k} m_{i} \sig(\bfz_{i})$, and $\sig(\bfp) \preceq \bfT$.
  Then $\bfp$ regular $\sig$-reduces to $0$.
\end{proposition}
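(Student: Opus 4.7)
The plan is to use the syzygies to ``absorb'' the leading signature of $\bfp$, producing an auxiliary element $\bfp'$ with $\bbfp' = \bbfp$ but strictly smaller signature, and then transfer the reduction provided by the hypothesis back to $\bfp$.

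Concretely, I would set $\bfp' := \bfp - \sum_{i=1}^{k} m_{i} \bfz_{i}$. Since each $\bfz_{i}$ is a syzygy, $\bbfz_{i}=0$, so $\bbfp' = \bbfp$; only the signature is affected by the subtraction. To estimate $\sig(\bfp')$, the key step is to unfold the identity $\sum m_{i}\sig(\bfz_{i}) = \sig(\bfp)$. Since the right-hand side is a single term $l\,x^{b}\bfe_{\ell}$ and distinct basis vectors in $\AA^{m}$ cannot combine, every nonzero $m_{i}\sig(\bfz_{i})$ must have module monomial $x^{b}\bfe_{\ell}$, with coefficients summing to $l \neq 0$. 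Looking at the $\ell$-th component of $\sum m_{i}\bfz_{i}$, one sees that its leading term is $l\,x^{b}$, so $\sig(\sum m_{i}\bfz_{i}) = \sig(\bfp)$, and the leading terms then cancel in $\bfp'$, giving $\sig(\bfp') \prec \sig(\bfp) \preceq \bfT$.

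Once this is established, the hypothesis yields a $\sig$-reduction $\bfp' = \bfp'_{0} \to \bfp'_{1} \to \dots \to \bfp'_{N} = 0$ whose individual reducers $l_{j}x^{a_{j}}\bfbeta_{j}$ all satisfy $\sig(x^{a_{j}}\bfbeta_{j}) \preceq \sig(\bfp'_{j-1}) \preceq \sig(\bfp') \prec \sig(\bfp)$. Applying the same subtractions to $\bfp$ in place of $\bfp'$ produces a chain $\bfp = \bfp_{0}, \bfp_{1}, \dots, \bfp_{N}$ with $\bfp_{j} - \bfp'_{j} = \sum m_{i}\bfz_{i}$ for every $j$. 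Since $\sig(\sum m_{i}\bfz_{i}) = \sig(\bfp)$ strictly dominates $\sig(\bfp'_{j})$, one has $\sig(\bfp_{j}) = \sig(\bfp)$ at each step; hence the strict inequality $\sig(x^{a_{j}}\bfbeta_{j}) \prec \sig(\bfp) = \sig(\bfp_{j-1})$ makes each step a regular $\sig$-reduction of $\bfp$. Since $\bbfp_{N} = \bbfp'_{N} = 0$, this provides the required regular $\sig$-reduction of $\bfp$ to $0$.

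The only delicate point is the claim $\sig(\sum m_{i}\bfz_{i}) = \sig(\bfp)$ in the ring setting: one must exclude unexpected cancellation of leading coefficients, which is ruled out precisely because the sum $\sum m_{i}\sig(\bfz_{i})$ is assumed to equal the nonzero term $\sig(\bfp)$. Everything else is bookkeeping that follows the standard field-case argument once the signature drop has been produced.
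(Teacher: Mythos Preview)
Your proof is correct and takes the same route as the paper's: form $\bfr = \bfp - \sum_{i} m_{i}\bfz_{i}$, use $\bbfr = \bbfp$ together with $\sig(\bfr) \prec \sig(\bfp) \preceq \bfT$ to $\sig$-reduce $\bfr$ to $0$ via the hypothesis, and then observe that every reducer involved has signature $\prec \sig(\bfp)$, so the very same chain is a \emph{regular} $\sig$-reduction of $\bfp$ to $0$. Your exposition is considerably more detailed than the paper's three-line argument---in particular you spell out the transfer of the reduction chain and the signature bookkeeping that the paper leaves implicit.
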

\proofvspace
\begin{proof}
  Let $\bfr = \bfp - \sum_{i=1}^{k} m_{i}\bfz_{i}$, then $\sig(\bfr) \prec \sig(\bfp) \preceq \bfT$ so $\bfr$ $\sig$-reduces to $0$.
  But $\bbfr = \bbfp - \sum_{i=1}^{k} m_{i}\bbfz_{i} = \bbfp$,
  so $\bfp$ also $\sig$-reduces to 0 with reducers of signature at most $\sig(\bfr) \prec \sig(\bfp)$.
\end{proof}

Koszul syzygies can be eliminated with the same technique, but it is more efficient to use the F5 criterion~\cite[Sec.~3.3]{practicalgrobner:2012:stillman}.

\begin{proposition}[F5 criterion, \cite{Faugere:2002:F5,BFS14}]
  \label{lemme:3}
  Let $\bfp \in \AA^{m}$ with signature $\mu\,\bfe_{i}$, and let $\{\bfalpha_{1},\dots,\bfalpha_{t}\}$ be a signature Gröbner basis of $\langle f_{1},\dots,f_{i-1} \rangle$.
  Then $\bfp$ is a Koszul syzygy if and only if $\mu$ is $\sig$-reducible modulo $\{\bfalpha_{1},\dots,\bfalpha_{t}\}$.
\end{proposition}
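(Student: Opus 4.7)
The plan is to prove both directions by exploiting the explicit generators of the Koszul syzygy submodule under the POT order. Recall that the Koszul syzygies of $(f_{1},\dots,f_{m})$ are generated, as an $\AA$-module, by $\bfz_{j,k} := f_{j}\bfe_{k} - f_{k}\bfe_{j}$ for $j<k$, and under POT one has $\sig(\bfz_{j,k}) = \LT(f_{j})\bfe_{k}$. I will interpret the statement as asserting that $\sig(\bfp) = \mu\bfe_{i}$ is the signature of some Koszul syzygy, which is the relevant property needed to invoke the syzygy criterion (Prop.~\ref{lemme:1}).

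For $(\Rightarrow)$, I would observe that a Koszul syzygy $\bfp$ of signature $\mu\bfe_{i}$ must have zero components at every position $>i$, since signatures record the highest-position nonzero component. Hence $\bfp$ is already a syzygy of $(f_{1},\dots,f_{i})$, and can be written as $\bfp=\sum_{j<k\leq i} p_{j,k}\bfz_{j,k}$. Reading off the $\bfe_{i}$-component yields $g:=\sum_{j<i}p_{j,i}f_{j}\in\langle f_{1},\dots,f_{i-1}\rangle$ with $\LT(g)=\mu$. Since $\{\bbfalpha_{1},\dots,\bbfalpha_{t}\}$ is a Gröbner basis of $\langle f_{1},\dots,f_{i-1}\rangle$, it follows that $\mu\in\langle\LT(\bbfalpha_{j})\rangle$; extracting the contributions at the single monomial $\LM(\mu)$ yields exactly the weak reducibility of $\mu$ by the $\bbfalpha_{j}$'s in the sense of Def.~\ref{def:reductionthatweneed}.

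For $(\Leftarrow)$, the proof is constructive: from a weak reduction $\mu=\sum_{j}l_{j}x^{a_{j}}\LT(\bbfalpha_{j})$ with $x^{a_{j}}\LM(\bbfalpha_{j})=\LM(\mu)$, I would build the polynomial $g:=\sum_{j}l_{j}x^{a_{j}}\bbfalpha_{j}\in\langle f_{1},\dots,f_{i-1}\rangle$. By construction the top contributions are all concentrated at $\LM(\mu)$ and sum to $\mu\neq 0$, so $\LT(g)=\mu$. Fixing any representation $g=\sum_{k<i}h_{k}f_{k}$, the element $\bfz:=\sum_{k<i}h_{k}\bfz_{k,i}$ is a Koszul syzygy whose $\bfe_{i}$-component is $g$, giving $\sig(\bfz)=\LT(g)\bfe_{i}=\mu\bfe_{i}=\sig(\bfp)$, as required.

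The main obstacle is not conceptual but technical, and lies in the ring-theoretic aspect of the $(\Rightarrow)$ direction: the Gröbner basis identity $\mu\in\langle\LT(\bbfalpha_{j})\rangle$ is an ideal membership in $\AA$, whereas weak reducibility over $\RR$ requires membership of $\LC(\mu)$ in a specific coefficient ideal, subject to the monomial constraint $\LM(\bbfalpha_{j})\mid\LM(\mu)$. Extracting the latter from the former requires splitting each summand $q_{j}\LT(\bbfalpha_{j})$ into its individual terms and observing that only those $x^{b}\LT(\bbfalpha_{j})$ with $x^{b}\LM(\bbfalpha_{j})=\LM(\mu)$ can contribute to $\mu$, while contributions at other monomials cancel. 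This passage, automatic over a field, needs to be made explicit here to recover the precise condition of Def.~\ref{def:reductionthatweneed}.
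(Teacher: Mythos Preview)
Your proof is correct and follows the same essential route as the paper: both directions reduce to the equivalence between $\mu \in \LT(\langle f_{1},\dots,f_{i-1}\rangle)$ and weak reducibility of $\mu$ modulo a weak Gröbner basis. The paper's own proof is a one-liner invoking this equivalence directly (``by definition of a weak Gröbner basis''), whereas you unpack both implications constructively; in particular, the technical concern you flag in the last paragraph is handled in the paper simply by citing characterization~(2) of the weak Gröbner basis definition, which already gives reducibility in the sense of Def.~\ref{def:reductionthatweneed} without needing to extract coefficients by hand.
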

\proofvspace
\begin{proof}
  By definition, $\bfp$ is a Koszul syzygy if and only if $m \in \LT(\langle f_{1},\dots,f_{i-1} \rangle)$, and the conclusion follows by definition of a weak Gröbner basis.
\end{proof}

\subsubsection{Singular Criterion}

The singular criterion states that the algorithm only needs to consider one S-polynomial with a given signature.
So when computing a new S-polynomial, if there already exists a $\sig$-reduced module element with the same signature, we may discard the current S-polynomial without performing any $\sig$-reduction.

\begin{proposition}[Singular criterion]
  \label{lemme:2}
  Let $\mathcal{G} = \{\bfalpha_{1},\dots,\bfalpha_{s}\}$ be a signature Gröbner basis up to signature $\bfT$.
  Let $\bfp \in \AA^{m}$ be such that there exists $\bfalpha_{i} \in \mathcal{G}$ with $\sig(\bfalpha_{i}) = \sig(\bfp)$ and $\sig(\bfp) = \sig(\bfT)$.
  Then $\bfp$ $\sig$-reduces to 0.
\end{proposition}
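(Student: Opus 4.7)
The plan is to exploit the equality $\sig(\bfp) = \sig(\bfalpha_i)$ by passing to the difference, which is forced to have strictly smaller signature and hence falls under the existing $\sig$-GB hypothesis. Concretely, I would introduce $\bfr := \bfp - \bfalpha_{i}$. Because $\sig(\bfp) = \sig(\bfalpha_{i})$ is an honest equality of terms (coefficients included, as defined in Section~\ref{sec:Sign-Based-Algor}), the leading signature contributions cancel when forming the difference, and so $\sig(\bfr) \prec \sig(\bfp) = \bfT$.

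Next, I would invoke the defining property of $\mathcal{G}$ being a signature Gröbner basis up to $\bfT$: every module element of signature strictly below $\bfT$ $\sig$-reduces to $0$ modulo $\mathcal{G}$. Applied to $\bfr$, this produces a $\sig$-compatible chain of top reductions taking $\bfr$ to $0$, and hence an expression $\bbfr = \sum_{k} q_{k}\,\bbfalpha_{j_{k}}$ with $\sig(q_{k}\bfalpha_{j_{k}}) \preceq \sig(\bfr) \prec \sig(\bfp)$ for every $k$. Combining with $\bbfp = \bbfalpha_{i} + \bbfr$ yields
\begin{equation}
  \bbfp = 1\cdot \bbfalpha_{i} + \sum_{k} q_{k}\,\bbfalpha_{j_{k}},
\end{equation}
a decomposition in which the term $\bbfalpha_{i}$ carries a signature equal to $\sig(\bfp)$ and every other term carries a strictly smaller signature.

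Finally, I would translate this $\sig$-compatible representation into a chain of top $\sig$-reduction steps from $\bfp$ to $0$: a first (possibly singular, since $\sig(\bfalpha_{i}) = \sig(\bfp)$) step using $\bfalpha_{i}$ to cancel the contribution of $\bbfalpha_{i}$ in the leading part of $\bbfp$, followed by the chain already known to reduce $\bfr$ to $0$, exactly as in the argument used in the proof of Lemma~\ref{supportinglemma} and of the equivalence stated in Lemma~4.6 above. The main subtlety I anticipate is precisely this last translation: since we use $\bfalpha_{i}$ as a reducer of $\bfp$ even though they share the same signature, the first step is singular rather than regular. This is admissible because the conclusion we are after is only that $\bfp$ $\sig$-reduces to $0$, not that it \emph{regularly} $\sig$-reduces to $0$, and that is precisely the content of the singular criterion: such a $\bfp$ adds no new information beyond $\bfalpha_{i}$ and may be discarded.
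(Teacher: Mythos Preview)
Your first two steps are sound: forming $\bfr = \bfp - \bfalpha_i$, noting that $\sig(\bfr) \prec \bfT$, and concluding that $\bfr$ $\sig$-reduces to $0$ is correct. The gap is in the final translation. For $\bfalpha_i$ to participate in a top $\sig$-reduction of $\bfp$ (your proposed ``first step''), one needs $\LM(\bbfalpha_i) = \LM(\bbfp)$, and nothing so far guarantees this: if for instance $\LM(\bbfr) \succ \LM(\bbfalpha_i)$ then $\LM(\bbfp) = \LM(\bbfr) \succ \LM(\bbfalpha_i)$ and $\bfalpha_i$ cannot appear among the reducers of $\LT(\bbfp)$ at all. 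The subtlety you flag (that the step would be singular rather than regular) is not the actual obstruction; the obstruction is whether the step is a top reduction in the first place. This is also where the analogy with the syzygy criterion (Proposition~\ref{lemme:1}) breaks down: there $\bbfz_i = 0$ gives $\bbfr = \bbfp$, so the reduction chain for $\bfr$ is literally a reduction chain for $\bfp$; here $\bbfr = \bbfp - \bbfalpha_i$ and no such shortcut is available.

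The paper's proof closes exactly this gap. It first regularly $\sig$-reduces $\bfp$ to some $\bfp'$, so that $\sig(\bfp') = \sig(\bfalpha_i)$ is preserved and both $\bfp'$ and $\bfalpha_i$ are regular $\sig$-reduced. Lemma~\ref{initialtermslemma} then forces $\LM(\bbfp') = \LM(\bbfalpha_i)$, which is precisely the missing alignment; Lemma~\ref{lemme:correctness-1sing} (with $k=1$, $x^a=1$) now yields that $\bfp'$ is $1$-singular $\sig$-reducible, after which the signature drops below $\bfT$ and the $\sig$-GB hypothesis finishes the job. Your argument can be repaired by inserting the same preliminary regular reduction (at which point it essentially becomes the paper's proof), or by an explicit induction on $\LM(\bbfp)$ that performs regular steps while $\LM(\bbfp) \succ \LM(\bbfalpha_i)$ before the singular one; but as written the translation step is incomplete.
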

\proofvspace
\begin{proof}
  Let $\bfp'$ be the result of regular $\sig$-reducing $\bfp$ \wrt{} $\mathcal{G}$.
  By construction, the basis element $\bfalpha_{i}$ is regular $\sig$-reduced \wrt{} $\mathcal{G}$.
  So by Lem.~\ref{initialtermslemma}, $\LM(\bbfp') = \LM(\bbfalpha_{i})$, and applying Lem.~\ref{lemme:correctness-1sing}, with $k=1$ and $x^{a}=1$, shows that $\bfp'$ is $1$-singular $\sig$-reducible.
  The result of that $\sig$-reduction has signature $\prec \sig(\bfp) = \bfT$, so it $\sig$-reduces to 0.
\end{proof}

\largevspace
\largevspace
\section{Experimental results and future work}
\label{sec:Implementation}

We have written a toy implementation of Algo. \algosigmoller\footnote{\url{https://github.com/ThibautVerron/SignatureMoller}}, with the F5 and Singular criteria.
We provide functions \algocoefs and \algosatideal for Euclidean rings, fields and multivariate polynomial rings.

%To the best of our knowledge, there is no reference implementation of Möller's algorithm, so no relevant comparisons of computation time can be done.
Since our focus is on the feasibility of signature-compatible computations and not their efficiency, we give data about the number of considered S-polynomials, saturated sets and reductions to 0, when computing Gröbner bases over $\ZZ$ for the polynomial systems Katsura-2~(Table~\ref{tab:bench-katsura2}) and Katsura-3~(Table~\ref{tab:bench-katsura3}).
The statistics are compared with a run of Möller's strong algorithm~\cite{Moller:1988:grobnerrings2}.
Even though the proposed algorithm, adapted from Möller's weak algorithm, considers more saturated sets than Möller's strong algorithm, thanks to the signatures, it ends up computing and reducing significantly less S-polynomials, and no reductions to zero appear.

Running Algo.~\algosigmoller on larger examples would require
optimizations, but it appears that the most expensive step is the
generation of the saturated sets, which takes time exponential in the
size of the current basis.
This step may be accelerated in different ways.
First, it is known that in the case of PIDs, the reductions of
Möller's algorithm can be recovered from those of Möller's strong
algorithm~\cite[Sec.~4.4]{Adams:1994:introtogrobnerbasis}, which may
allow to run the algorithms considering only pairs instead of
arbitrary tuples of polynomials.
Additionally, Gebauer and Möller's criteria for fields can be used to
make Möller's strong algorithm over PIDs more
efficient~\cite{Moller:1988:grobnerrings2}.
We will investigate whether it is possible to prove that these
algorithms are compatible with signatures in the future.
% It may also be possible to compute saturated sets in a similar way to
% F4's selection technique, reducing the associated overhead.
Finally, future research will be focused on further signature-based
criteria, such as the cover criterion described in~\cite{GVW2016} and
the more general rewriting criteria.

The algorithm accepts as input polynomials over any ring, provided that the necessary routines are defined.
In particular, our implementation can run the algorithms on polynomials on the base ring $\KK[y_{1},\dots,y_{k}]$.
On small examples in this setting, it appears that the algorithm terminates and returns a correct output.
Understanding the behavior of \algosigmoller over UFDs or even more general rings will also be the focus of future research.

% \addtolength{\textfloatsep}{-0.7cm}
% \addtolength{\floatsep}{-0.5cm}

\renewcommand{\pbox}[2][l]{%
  \begin{tabular}[c]{@{}#1@{}}#2\end{tabular}%
}

\begin{table}
  \centering
  \caption{Computation of a grevlex GB of the Katsura-2 system in $\ZZ[X_1,X_2,X_3]$ }
  \label{tab:bench-katsura2}
  \begin{tabular}{lccc}
    \hline
    Algorithm & Pairs / sat. sets & S-polynomials &  Reductions to 0
    \\
    \hline
    Möller strong & 78 & 20 & 7 \\
    \algosigmoller (with criteria) & 170 & 13 & 0 \\
    \hline
  \end{tabular}
\end{table}

\begin{table}
  \centering
  \caption{Computation of a grevlex GB of the Katsura-3 system in $\ZZ[X_1,X_2,X_3,X_{4}]$ }
  \label{tab:bench-katsura3}
  \begin{tabular}{lccc}
    \hline
    Algorithm & Pairs / sat. sets & S-polynomials & Reductions to 0
    \\
    \hline
    Möller strong & 861 & 246 & 159 \\
    % \algosigmoller & & &
    \algosigmoller (with criteria) & 2227 & 51 & 0 \\
    \hline
  \end{tabular}
\end{table}

% \begin{table}
%   \centering
%   \begin{tabular}{lccc}
%     \hline
%     Algorithm & S-polynomials & Sat. sets & Red. to 0 % & Time
%     \\
%     \hline
%     \algomoller & 707 & 707 & 687 % & 41.2 s
%     \\
%     \algosigmoller & 90 & 90 & 81 %&  6.8 s
%     \\
%     \pbox{--- with F5 and \\ Singular crit.} & 9 & 90 & 0 % & 6.7 s
%     \\
%     \hline
%   \end{tabular}
%   \caption{Computation of a GB of 
%     $\langle 3xy + 7yz, 4y^{2}- 5xz, x - 2y + z \rangle$ in $\ZZ[x,y,z]$}
%   \label{tab:results1}
% \end{table}

% \begin{table}
%   \centering
%   \begin{tabular}{lccc}
%     \hline
%     Algorithm & S-polynomials & Sat. sets & Red. to 0 % & Time
%     \\
%     \hline
%     \algomoller\footnotemark & >800 & >800 &  % & >6h
%     \\
%     \algosigmoller & 279 & 279 & 255 % &  197.5 s
%     \\
%     \pbox{--- with F5 and \\ Singular crit.} & 20 & 211 & 0 % & 98.2s
%     \\
%     \hline
%   \end{tabular}
%   \caption{Computation of a GB of 
%     $\langle 3xy + 7yz, 4y^{2}- 5xz, x^{2} - 2yz + z^{2} \rangle$ in $\ZZ[x,y,z]$}
%   \label{tab:results2}
% \end{table}
% \footnotetext{Algo. \algomoller ran out of memory.}

%\section{Concluding Remarks}

%\bibliographystyle{plain}%\bibliography{bib2}
\largevspace
\smallvspace

\subsection*{Acknowledgments}
The authors thank C.~Eder and anonymous referees for helpful suggestions, M.~Ceria and T.~Mora for a fruitful discussion on the syzygy paradigm for Gröbner basis algorithms, and M.~Kauers for his valuable insights and comments all through the elaboration of this work.

\makeatletter
\enddoc@text
\def\enddoc@text{\relax}
\makeatother

\ifappendix
\newpage
\appendix
\section{Example run of Algorithm~\algosigmoller (Algo.~\ref{Algorithmsig})}
\label{sec:Example-run-Algor}

\newcommand{\sigindex}[1]{#1^{\ast}}

As an illustration, consider the ring $A = \ZZ[x,y]$ with the lexicographical ordering with $x > y$, and the ideal generated by $f_{1}=3xy +x + y^{2}$ and $f_{2}=x^{2}$.

The algorithm maintains a signature Gröbner basis $G$ and a queue of saturated pairs $\mathcal{P}$.
Both are finite ordered sequences (lists), which we denote with square brackets, \eg{} $G = [\bfg_{1},\bfg_{2},\dots,\bfg_{t}]$.
The elements of $G$ are pairs $(\text{polynomial},\text{signature})$, for which we use the notations $\bfg_{i} = (g_{i},\sig(\bfg_{i}))$.
To simplify the notations, given a basis $G = [\bfg_{1},\dots,\bfg_{t}]$, $m$ a monomial of $A$ and $k \in \NN$, we use the notation
\begin{equation}
  \label{eq:5}
  J_{m}^{(k)} = \Sat(m ; \LM(g_{1}),\dots,\LM(g_{k})) = \left\{ i : i \in \{1..k\} : \LM(g_{i}) \divides m \right\}
\end{equation}
for the saturated sets.

In a saturated set, we will use the notation ${}^{\ast}$ to denote those indices which contribute the maximal signature.
For example, in the saturated set $J = \{1,2,\sigindex{4},5,\sigindex{8}\}$, we would have
\begin{equation}
  \label{eq:16}
  S_{J} \simeq \frac{M(J)}{\LM(g_{4})} \sig(\bfg_{4})
  \simeq \frac{M(J)}{\LM(g_{8})}\sig(\bfg_{8})
  \succneq \frac{M(J)}{\LM(g_{i})}\sig(\bfg_{i}) \text{ for $i \in \{1,2,5\}$}.
\end{equation}
If only one index is marked with a star, the saturated set is regular and that index is the signature index.

The algorithm starts with an empty basis $G = []$ and an empty queue of regular saturated sets
$\mathcal{P} = []$.
We first add $f_{1}$ with signature $\bfe_{1}$ to $G$, that is, we add the element $\bfg_{1} = (3xy +x + y^{2}, \bfe_{1})$ to~$G$.

We observe that no saturated sets can be formed, so $G_{1} = [\bfg_{1}]$ is a weak signature Gröbner basis of $\langle f_{1} \rangle$.

We then introduce $f_{2}$, with signature $\bfe_{2}$.
It cannot be reduced modulo $G$, so we add to the basis the element $\bfg_{2}=(x^{2},\bfe_{2})$.

To form regular saturated sets, we consider all possible least common
multiples of leading monomials of $g_{i}$ involving $g_{2}$.
Here, the set of leading monomials is $\{xy,x^{2}\}$, and the only
non-trivial LCM that we can form is $x^{2}y$.
For each least common multiple $m$, the set
\begin{equation}
  J = J_{m}^{(2)} = \Sat(m; \LM(g_{1}),\LM(g_{2}) ) = \{i \in \{1,2\} : \LM(g_{i}) \text{ divides } m\}
\end{equation}
is a saturated set, with $M(J) = m$.
Here, for $m = x^{2}y$, we get $J_{1} = J_{x^{2}y}^{(2)}= \{1,\sigindex{2}\}$.

We have $M(J_{1}) = x^{2}y = x \LM(g_{1}) = y \LM(g_{2})$.
We multiply the corresponding signatures and we compare: here, $x
\sig(\bfg_{1}) = x\bfe_{1} \precneq y\sig(\bfg_{2}) = y\bfe_{2}$, so the
presignature of $J_{1}$ is $S_{J_{1}} = y\bfe_{2}$, and it is regular with
signature index $2$.

We now compute a S-polynomial associated to $J_{1}$, namely $h_{3} =
3yg_{2} - xg_{1} = -x^{2}-xy^{2}$ with signature $\sig(\bfh_{3}) =
3y\sig(\bfg_{2}) = 3y\bfe_{2}$.
Since $\LM(h_{3}) = - \LM(g_{2})$ and $\sig(\bfh_{3}) \succneq
\sig(\bfg_{2})$, $h_{3}$ is regular $\sig$-reducible modulo $G$, and the
result is $g_{3} = -xy^{2}$.
It still has signature $3y\bfe_{2}$, because we only performed a regular $\sig$-reduction.
We add to $G$ the element $\bfg_{3} = (-xy^{2},3y\bfe_{2})$.

The next few steps are identical, so we give a fast-forward version:
\begin{enumerate}
  \item[4.] Regular saturated set $J_{2} = J_{xy^{2}}^{(3)}= \{1,\sigindex{3}\}$ with $M(J_{2}) =
  xy^{2}$ and $S_{J_{2}} = y\bfe_{2}$
  \\ $\rightarrow$ $\bfg_{4} =
  (xy+y^{3}, 9y\bfe_{2})$;
  \item[5.] Regular saturated set $J_{3} = J_{xy}^{(4)} = \{1,\sigindex{4}\}$ with $M(J_{3}) = xy$
  and $S_{J_{3}} = y\bfe_{2}$
  \\ $\rightarrow$ $\bfg_{5} =
  (-x+3y^{3}-y^{2}, 27y\bfe_{2})$;
  \item[6.] Regular saturated set $J_{4} = J_{xy}^{(5)} =  \{1,4,\sigindex{5}\}$ with $M(J_{4}) = xy$
  and $S_{J_{4}} = y^{2}\bfe_{2}$
  \\ $\rightarrow$ $\bfg_{6} = (3y^{4},
  27y^{2}\bfe_{2})$;
  \item[7.] Regular saturated set $J_{5} = J_{xy^{2}}^{(4)} = \{1,3,\sigindex{4}\}$ with $M(J_{5}) =
  xy^{2}$ and $S_{J_{5}} = y^{2}\bfe_{2}$
  \\ $\rightarrow$ $\bfg_{7} =
  (y^{4}, 9y^{2}\bfe_{2})$.
\end{enumerate}
Both $J_{3}$ and $J_{5}$ were added to $\mathcal{P}$ after Step 4 (construction of $\bfg_{4}$).
But at Step 5, since $S_{J_{3}} \precneq S_{J_{5}}$, we have to consider $J_{3}$ first, and keep $J_{5}$ for later.
After Step 5, $\mathcal{P}$ contains both $J_{4}$ and $J_{5}$, whose
presignatures are incomparable: $S_{J_{4}} \simeq S_{J_{5}}$.
So we could have considered $J_{5}$ before $J_{4}$, the result would
still have been correct.

% However, we could have considered $J_{5}$ before $J_{4}$, since the signatures are incomparable.
% The result would still be correct.
%   \end{remark}

After introducing $\bfg_{7}$, the basis $G$ has $7$ elements:
\begin{multline}
  \label{eq:2}
  G = [
  (3xy+\dots, \bfe_{1}), (x^{2}, \bfe_{2}), (-xy^{2}, 3y\bfe_{2}), (xy+\dots, 9y\bfe_{2}),\\
  (-x + \dots, 27y\bfe_{2}), (3y^{4},27y^{2}\bfe_{2}), (y^{4},9y^{2}\bfe_{2})
  ].
\end{multline}
The queue $\mathcal{P}$ is not empty at this point, but before continuing, we need to form regular saturated sets using the latest addition $\bfg_{7}$.

We go through the same process as before to form saturated sets:
\begin{enumerate}
  \item List all possible least common multiples of leading monomials involving $\LM(g_{7})$: those are $y^{4}, xy^{4}, x^{2}y^{4}$.
  \item For each of them, compute the corresponding saturated set:
  \begin{itemize}
    \item $y^{4}$ gives $J_{y^4}^{(7)} = \{\sigindex{6},\sigindex{7}\}$ with presignature $y^{2}\bfe_{2}$;
    \item $xy^{4}$ gives $J_{xy^4}^{(7)} = \{1,3,4,5,\sigindex{6},\sigindex{7}\}$, with presignature $xy^{2}\bfe_{2}$;
    \item $x^{2}y^{4}$ gives $J_{x^2y^4}^{(7)} = \{1,2,3,4,5,\sigindex{6},\sigindex{7}\}$ with presignature $x^{2}y^{2}\bfe_{2}$.
  \end{itemize}
\end{enumerate}
None of those 3 saturated sets is regular: there is always a signature
collision between $\bfg_{6}$ and $\bfg_{7}$.
For example, with $J_{x^4}^{(7)}$, $\sig(\bfg_{6}) \simeq \sig(\bfg_{7}) \simeq y^{2}\bfe_{2}$.

So we need to make them regular, which is done by forming new sets
with just one of the colliding signatures.
From $J_{y^4}^{(7)}$, we could form $\{\sigindex{6}\}$ and $\{\sigindex{7}\}$, which are trivial.

From $J_{xy^4}^{(7)}$, we can form the regular saturated sets $\{1,3,4,5,\sigindex{6}\}$ and $\{1,3,4,5,\sigindex{7}\}$.
Since the set $\{1,3,4,5,\sigindex{6}\}$ does not contain $7$, it is already in $\mathcal{P}$.
And we add the new regular saturated set $\{1,3,4,5,\sigindex{7}\}$ to $\mathcal{P}$.

Similarly, from $J_{x^2y^4}^{(7)}$, we find the new regular saturated set $\{1,2,3,4,5,\sigindex{7}\}$ to add to $\mathcal{P}$.

Then we continue with the regular saturated set $J_{6} = \{1,3,4,\sigindex{5}\}$ with $M(J_{2}) = x^{2}y$ and $S(J_{2}) = 27y^{3}\bfe_{2}$.
It gives rise to $h_{8} = 3y^{5} + y^{4}$ with signature $\sig(\bfh_{8}) = 27y^{3}\bfe_{2}$.

Since $\LM(h_{8}) = y\LM(g_{6})$ and $\sig(\bfh_{8}) = y \sig(\bfg_{6})$, we
know that $h_{8}$ is $1$-singular reducible modulo $G$ and can be
discarded.
Note that we only needed to compare the leading \emph{monomials}
(without coefficients) of $h_{8}$ and $g_{6}$, and not verify whether
there is an actual linear combination eliminating that term.

\begin{remark}
  If we had considered
  $J_{5}$ before $J_{4}$ at Step 6, $\bfg_{7}$ would have been built before
  $\bfg_{6}$, and $\bfg_{6}$ would have been discarded for being
  1-singular reducible modulo $\bfg_{7}$.
  In that case, the non-regular saturated sets $J_{y^4}^{(7)}$,
  $J_{xy^4}^{(7)}$ and $J_{x^2y^4}^{(7)}$ would never have been considered.
\end{remark}

The remainder of the run proceeds differently, depending on whether the
F5 criterion (Prop.~5.8) is implemented.
If it is not, the remaining regular saturated sets all give rise to
polynomials regular $\sig$-reducing to $0$, and the algorithm
terminates, returning the $7$-elements basis written above.

On the other hand, if the F5 criterion is implemented, those reductions to zero are excluded.
Let us illustrate it with the next saturated set in the queue: $J_{7}
= \{2,\sigindex{5}\}$, with signature $27xy\bfe_{2}$.
We need to test whether $27xy$ lies in the ideal of leading terms of
$\langle f_{1} \rangle$, which is equivalent to testing whether $27xy$
is reducible modulo the already computed basis $G_{1} =
\{(3xy^{2}+x+y^{2},\bfe_{1})\}$.
Since it is indeed reducible, this regular saturated set indeed satisfies the F5
criterion, and can be discarded without further calculation.
The criterion eliminates all subsequent regular saturated sets in the same way.

%%% Local Variables:
%%% mode: latex
%%% TeX-master: "main"
%%% End:

\fi

\end{document}

%%% Local Variables:
%%% mode: latex
%%% TeX-master: t
%%% End: